\newif\ifFULL 
\newenvironment{onlyfull}{}{}\newenvironment{onlyproc}{\comment}{\endcomment}
\newenvironment{onlyfull}{\comment}{\endcomment}\newenvironment{onlyproc}{}{}
\DeclareMathAlphabet\mathbfcal{OMS}{cmsy}{b}{n}
\theoremstyle{definition}
\newtheorem{theo}{Theorem}
\newtheorem{lemm}[theo]{Lemma}
\newtheorem{defi}[theo]{Definition}
\newcommand\labelx[1]{
  \label{#1}
  \newtheorem*{theo-#1}{\Cref{#1}}
}
\crefname{section}{Section}{Sections}
\crefname{definition}{Definition}{Definitions}
\crefname{lemma}{Lemma}{Lemmas}
\crefname{figure}{Fig.}{Figs.}
\crefname{table}{Table}{Tables}
\crefname{appendix}{Appendix}{Appendices}
\newcommand\ab\allowbreak
\newcommand\mc[3]{\multicolumn{#1}{#2}{#3}}
\newcommand\tsub[1]{\mathsf{Sub}(#1)}
\newcommand\ldepth{\mathsf{ld}}
\newcommand\minld{\mathsf{mld}}
\newcommand\redex{\mathsf{rdx}}
\newcommand\tuple[1]{\langle #1\rangle}
\newcommand\tto{\rightsquigarrow}
\newcommand\toi{\to^{\textsc{i}}}
\newcommand\Term[1]{\mathcal{T}_{#1}}
\newcommand\Ctx[1]{\mathcal{C}_{#1}}
\newcommand\Dom{\mathsf{dom}}
\newcommand\Lang{\mathcal{L}}
\newcommand\Var{\mathcal{V}}
\newcommand\FVar{\mathsf{FV}}
\newcommand\NF{\mathsf{NF}}
\newcommand\hole{\square}
\newcommand\A{{\scriptstyle @}}
\newcommand\Aa{{\scriptscriptstyle @}}
\newcommand\QF{Q\setminus\{q_F\}}
\newcommand\Qrdx{\mathscr{Q}}
\newcommand\Iff{\Leftrightarrow}
\newcommand\To{\Rightarrow}
\newcommand\cP{\mathbf{P}}
\newcommand\cPh{\mathbf{\Phi}}
\newcommand\cS{\mathbf{S}}
\newcommand\cD{{\mathbfcal{D}}}
\newcommand\cO{\mathbf{O}}
\newcommand\cK{\mathbf{K}}
\newcommand\cJ{\mathbf{J}}
\newcommand\cB{\mathbf{B}}
\newcommand\cC{\mathbf{C}}
\newcommand\cW{\mathbf{W}}
\newcommand\Nat{\mathbb{N}}
\newcommand\TDAlong{termination-disproving automaton\xspace}
\newcommand\TDA{TDA\xspace}
\newcommand\TDAs{TDAs\xspace}
\newcommand\TDASlong{termination-disproving automaton with a final sink\xspace}
\newcommand\TDAS{TDA-S\xspace}
\newcommand\TDASs{TDA-Ss\xspace}
\begin{document}
\begin{onlyproc}
\title{Disproving Termination of Non-Erasing Sole Combinatory Calculus
with Tree Automata}
\end{onlyproc}
\begin{onlyfull}
\title{Disproving Termination of Non-Erasing Sole Combinatory Calculus
with Tree Automata \\
(Full Version)%
\thanks{This is the full version of the corresponding CIAA 2024 paper.}
}
\end{onlyfull}
%
\titlerunning{Disproving Termination of Non-Erasing Sole Combinatory Calculus with TA}
%
\author{Keisuke Nakano\inst{1}\orcidID{0000-0003-1955-4225} \\
Munehiro Iwami\inst{2}\orcidID{0000-0001-9925-450X}}
\authorrunning{K. Nakano and M. Iwami}
%
\institute{Tohoku University, Sendai, Miyagi, Japan \\
\email{k.nakano@acm.org} \\
\and
Shimane University, Matsue, Shimane, Japan\\
\email{munehiro@cis.shimane-u.ac.jp}
}
\maketitle              
\begin{abstract}
We study the termination of sole combinatory calculus,
which consists of only one combinator.
Specifically,
the termination for non-erasing combinators is disproven
by finding a desirable tree automaton with a SAT solver 
as done for term rewriting systems by Endrullis and Zantema.
We improved their technique to apply to non-erasing sole combinatory calculus,
in which it suffices to search for tree automata with a final sink state.
Our method succeeds in disproving the termination of 8 combinators,
whose termination has been an open problem.
\keywords{Combinatory calculus \and Non-termination \and Tree automata.}
\end{abstract}
%
%
%


\section{Introduction}

Combinatory logic~\cite{Schoenfinkel1924,Curry1930}
has been used in computer science
as a theoretical model of computation
and also as a basis for the design of functional programming languages%
~\cite{DBLP:journals/spe/Turner79,DBLP:books/ph/Jones87}.
It can be viewed as a variant of lambda calculus,
in which
a limited set of combinators,
primitive functions without free variables,
is used instead of lambda abstractions.

Combinators in combinatory logic are defined 
as \(Z x_1 x_2 \dots x_n \to e\)
where \(Z\) is a combinator,
\(x_1, \dots, x_n\) are variables,
and \(e\) is built from the variables with a function application.
It is known that a small set of combinators 
can define a combinatorial calculus that is sufficient
to cover all computable functions.
Well-known sets of such combinators are 
\(\{\cS, \cK\}\) and \(\{\cB, \cC, \cK, \cW\}\)
with 
\(\cS xyz \to xz(yz)\),
\(\cK xy \to x\),
\(\cB xyz \to x(yz)\),
\(\cC xyz \to xzy\), and 
\(\cW xy \to xyy\)~\cite{DBLP:books/daglib/0067558}.

The subject of this paper is \emph{sole combinatory calculus},
which consists of only one combinator.
There have been several studies on sole combinatory calculi.
Waldmann~\cite{DBLP:journals/iandc/Waldmann00} investigated the \(\cS\) combinator
to provide a procedure that decides whether an \(\cS\)-term,
built from \(\cS\) alone, has a normal form
and further showed that the set of normalizing \(\cS\)-terms is recognizable.
Probst and Studer~\cite{DBLP:journals/tcs/ProbstS01} proved that
the sole combinatory calculus with the \(\cJ\) combinator,
defined by \(\cJ xyzw \to xy(xwz)\)
is strongly normalizing; that is,
no \(\cJ\)-term has an infinite reduction sequence.
Ikebuchi and Nakano~\cite{DBLP:journals/lmcs/IkebuchiN20} 
showed that
the sole combinatory calculus with the \(\cB\) combinator
is strongly normalizing and characterized by equational axiomatization
for proving the looping and non-looping properties of repetitive right applications.

This paper concerns the non-termination of sole combinatory calculi,
where termination means that no term has an infinite reduction.
Let us say that a combinator is \emph{terminating}
when the corresponding sole combinatory calculus is terminating.
Iwami~\cite{weko_227536_1} has investigated the termination of 37 combinators
introduced in Smulyan's book~\cite{smullyan1994diagonalization}
and has reported that 10 of them shown in \Cref{fig:combs}
are of unknown termination.
\begin{figure}[t]
\begin{gather*}
\cP x y z \to z (x y z)
\quad\!
\cP_3 x y z \to y (x z y)
\quad\!
\cD_1 x y z w \to x z (y w) (x z)
\quad\!
\cD_2 x y z w \to x w (y z) (x w)
\\
\cPh x y z w \to x (y w) (z w)
\quad
\cPh_2  x y z  w_1 w_2 \to x  (y w_1 w_2)  (z w_1 w_2)
\quad
\cS_1 x y z w \to x y w (z w)
\\
\cS_2 x y z w \to x z w (y z w)
\qquad
\cS_3 x y z w v \to x y (z v) (w v)
\qquad
\cS_4 x y z w v \to z (x w v) (y w v)
\\[-6ex]
\end{gather*}
\caption{Combinators and their reduction rules}
\label{fig:combs}
~\\[-6ex]
\end{figure}

We disprove the termination of 8 of the 10 combinators.
%
The main idea of disproving the termination is 
to give a non-empty recognizable set of terms closed under the reduction
as Endrullis and Zantema have done~\cite{DBLP:conf/rta/EndrullisZ15}.
They showed that 
a SAT solver can find the corresponding tree automaton.
We improve their method
by showing that it suffices to search for tree automata with a final sink state
in our setting
and by reducing the number of variables in the SAT problem.
Our implemetation disproves the termination of 8 combinators.

\section{Preliminaries}
\label{sec:prelim}

A \emph{signature} (or \emph{alphabet}) \(\Sigma\)
is a non-empty finite set of \emph{function symbols},
each with a fixed natural number called \emph{arity}
(or \emph{rank})%
\footnote{
  We use the terminology of term rewriting systems,
  whereas definitions are given
  alongside that of formal language theory for the readers,
  e.g., a \emph{tree} and a \emph{rank} in formal language are called
  a \emph{(ground) term} and an \emph{arity} in term rewriting, respectively.
}.
The set of all function symbols of arity \(n\) in \(\Sigma\) is
written as \(\Sigma^{(n)}\).
We may write \(f^{(n)}\) 
for \(f\in\Sigma^{(n)}\).
A function symbol of arity 0 is called a \emph{constant} symbol.
A set of \emph{variables} is a countably infinite set
disjoint from \(\Sigma\).
For a set \(\Var\) of variables,
a set of \emph{terms} over \(\Sigma\), denoted by \(\Term{\Sigma}(\Var)\),
is inductively defined as the smallest set \(S\) 
such that \(\Var\subseteq S\) and 
\(t_1,\dots,t_n\in S\) implies \(f(t_1,\dots,t_n)\in S\)
for every \(f\in\Sigma^{(n)}\).
The set of variables occurring in \(t\in\Term{\Sigma}(\Var)\) is
denoted by \(\FVar(t)\).
In the rest of the paper,
the set \(\Var\) of variables is fixed
and contains \(x, x_1, x_2, \dots\) as its elements.
A \emph{substitution} is a finite map from variables to terms.
We write \(\Dom(\alpha)\) for the domain of a substitution \(\alpha\).
For a term \(t\) and a substitution \(\alpha\),
we denote by \(t\alpha\) an \emph{instance} of \(t\), 
a term obtained by replacing every variable \(x\) in \(t\) with \(\alpha(x)\).
Substitutions may be represented in the set notation as usual:
we write 
\(\{x_1\mapsto t_1,\dots, x_n\mapsto t_n\}\) 
for substitution \(\alpha\) 
when \(\Dom(\alpha)=\{x_1,\dots,x_n\}\) and
\(\alpha(x_i)=t_i\) holds for each \(i\)
and, in particular, \(\emptyset\) for substitution \(\alpha\) 
when \(\Dom(\alpha)=\emptyset\).
A term containing no variables is called a \emph{ground term},
and the set of ground terms is written as \(\Term{\Sigma}\),
i.e., \(\Term{\Sigma}=\Term{\Sigma}(\emptyset)\).
%
A
\emph{context} \(C\) is
a term over \(\Sigma\cup\{\hole\}\)
with a constant symbol \(\hole\), called a \emph{hole},
such that \(\hole\) occurs exactly once in \(C\).
For a context \(C\) and a term \(t\),
we denote by \(C[t]\) a term obtained by replacing the hole in \(C\) with \(t\).
We write \(\Ctx{\Sigma}(\Var)\) for the set of contexts;
in particular, 
\(\Var\) is omitted from the notation if it is empty,
i.e., \(\Ctx{\Sigma}=\Ctx{\Sigma}(\emptyset)\).
A term \(s\in\Term{\Sigma}\) is a \emph{subterm} of \(t\in\Term{\Sigma}\)
if \(t=C[s]\) holds with some \(C\in\Ctx{\Sigma}\).
%
The set of all subterms of \(t\) is written
as \(\tsub{t}\).

A \emph{term rewriting system} (TRS) \(R\) over \(\Sigma\) is
a set of rewriting rules of the form \(l\to r\) with 
\(l, r\in\Term{\Sigma}(\Var)\) where \(\FVar(r)\subseteq\FVar(l)\).
A TRS \(R\) is said to be \emph{non-erasing}
if \(\FVar(r)=\FVar(l)\) holds for every rule \(l\to r\in R\).
A TRS \(R\) is said to be \emph{left-linear}
if each variable in \(\FVar(l)\) occurs exactly once in \(l\)
for every rule \(l\to r\in R\).
A left-linear TRS \(R\) is said to be \emph{orthogonal}
if every pair of two (possibly the same) rules
in \(R\) has no overlapping,
which means that
the left-hand side of one rule is not unifiable with
any non-variable subterm of the left-hand side of the other rule.
%
A \emph{rewrite relation} \(\to_R\) over \(\Term{\Sigma}(\Var)\) induced by \(R\)
is defined by \(\{(C[l\alpha], C[r\alpha]) 
\mid C\in\Ctx{\Sigma}(\Var),~ l\to r\in R,~
\alpha: \FVar(l) \to\Term{\Sigma}(\Var)\}\).
The subscript \(R\) may often be omitted if no confusion arises.
A term \(t\) is called a \emph{redex} of \(R\)
if \(t=l\alpha\) holds for some \(l\to r\in R\) and substitution \(\alpha\);
that is, a redex means a reducible part of a term.
A term \(t\) is said to be in \emph{normal form} with respect to \(R\)
if there is no term \(u\) such that \(t\to_R u\);
in other words, no subterm of \(t\) is a redex of \(R\).
A set of normal forms with respect to \(R\) is denoted by \(\NF(R)\).
A TRS \(R\) is \emph{terminating} or \emph{strongly normalizing}
if no infinite rewrite sequence \(t_0\to_R t_1\to_R t_2\to_R \dots\)
with \(t_i\in\Term{\Sigma}(\Var)\) and \(i\in\Nat\)
exists.
A TRS \(R\) is \emph{weakly normalizing}
if for every term \(t\) there exists a term \(u\in\NF(R)\)
such that \(t\to_R^* u\) holds.
A rewrite step \(s \to_R t\) is \emph{innermost},
denoted by \(s \toi_R t\),
if no proper subterm of the contracted redex is itself a redex%
~\cite{DBLP:books/daglib/0007479},
that is,
the relation \(\toi_R\) is
defined by a subset of \(\to_R\) 
as \(\{(C[l\alpha], C[r\alpha]) 
\mid C\in\Ctx{\Sigma}(\Var),~ l\to r\in R,~
\alpha: \FVar(l) \to\Term{\Sigma}(\Var),~
\tsub{l\alpha}\subseteq\NF(R)\cup\{l\alpha\} \}\).
A TRS \(R\) is \emph{weakly innermost normalizing}
if for every term \(t\) there exists a term \(u\in\NF(R)\)
such that \(t\mathbin{\toi_R}^* u\) holds.
A set \(S\) of terms is \emph{closed} under \(R\)
if for every \(t\in S\), \(t\to_R u\) implies \(u\in S\).

A (non-deterministic bottom-up) \emph{tree automaton} 
is a quadruple \(A = \tuple{Q, \Sigma,\ab F, \Delta}\)
where \(Q\) is a finite set of states,
\(F\subseteq Q\) is a set of final states, and
\(\Delta\) is a set of state transition rules of the form \(f(q_1, \dots, q_n)\tto q\)
with \(f\in\Sigma^{(n)}\), \(q_1,\dots,q_n,q\in Q\).
The arrow \(\tto_\Delta\) is used for the rewrite relation over \(\Term{\Sigma\cup Q}\)
induced by the rules in \(\Delta\)
where the subscript \(\Delta\) may often be omitted if no confusion arises.
The set \(\Lang(A, q)\) for \(q\in Q\) is defined by
\(\Lang(A, q)\equiv\{ t \in \Term{\Sigma} \mid t \tto_\Delta^* q \}\).
The set of terms \emph{accepted} by \(A\)
is defined by 
\(\Lang(A)\equiv\bigcup_{q\in F} \Lang(A, q)\).
A state \(q\) is said to be \emph{reachable}
if \(\Lang(A, q)\) is not empty.
A state \(q\) is called a \emph{sink state}
when,
for every \(f\in\Sigma^{(n)}\) (\(n>0\)) and \(q', q_1,\dots,q_n\in Q\)
with \(q_i = q\) for some \(i\),
\(f(q_1,\dots,q_n)\tto q\in\Delta\) holds but
\(f(q_1,\dots,q_n)\tto q'\in\Delta\) does not hold
with \(q'\ne q\).
When \(q\) is a sink state,
it is easy to show that \(t\in \Lang(A, q)\) implies \(C[t]\in\Lang(A, q)\)
for every context \(C\in\Ctx{\Sigma}\) and every ground term \(t\in\Term{\Sigma}\).
Two tree automata \(A_1\) and \(A_2\) are said to be \emph{equivalent}
if \(\Lang(A_1)=\Lang(A_2)\) holds.

\section{Termination of sole combinatory calculus}
A combinatory calculus is specified
by certain kinds of combinators and reduction rules for each combinator.
A reduction rule for a combinator \(Z\) has the form
\( Z ~x_1 ~\dots ~x_n \to e \)
where \(e\) is built by combining \(x_1, \dots, x_n\) with function application.
One of the most familiar combinatory calculi is given by
the \(\cS\) and \(\cK\) combinators
defined by
\(\cS ~x_1 ~x_2 ~x_3 \to x_1 ~x_3 ~(x_2 ~x_3)\) and 
\(\cK ~x_1 ~x_2 \to x_1\).
The calculus is well known to be Turing-complete
in the sense that
these two combinators are sufficient
to represent all computable functions.
Reductions in combinatory calculus are easily simulated by a TRS
using the set of constant symbols for combinators 
and a binary function symbol \(\A\) for function application.
For example, 
reduction rules for the \(\cS\) and \(\cK\) combinatory calculus
can be represented by a TRS consisting of 
\(\A(\A(\A(\cS, x_1), x_2), x_3) \to \A(\A(x_1, x_3), \A(x_2, x_3))\) and
\(\A(\A(\cK, x_1), x_2) \to x_1\),
which is obviously non-terminating
because of the Turing-completeness of the corresponding reduction system.

In this paper, we are interested in the question for sole combinatory calculus,
which is built only by one combinator.
We start with formal definitions of several notions 
on the sole combinatory calculus in terms of term rewriting.
For \(Z\) is a combinator,
we denote by \(\Sigma_Z\) a signature consisting of
a constant \(Z\) and binary function symbol \(\A\).
A \emph{\(Z\)-term}, which is built only from \(Z\) in combinatory calculus,
is represented by a term in \(\Term{\Sigma}\).
A \emph{sole combinatory calculus} \(R_Z\) induced by a combinator \(Z\)
is a TRS over \(\Term{\Sigma_Z}\)
where \(R_Z\) is a singleton set of a left-linear rule of the form
\( \A(\dots\A(\A(Z, x_1), x_2)\dots, x_n) \to e \)
with a term \(e\) built only from \(\A\) and variables \(x_1,\dots,x_n\).
A combinator \(Z\) is said to be \emph{terminating}
if \(R_Z\) is terminating. 
It is known that
\(\cB\)~\cite{DBLP:journals/lmcs/IkebuchiN20}
and \(\cJ\)~\cite{DBLP:journals/tcs/ProbstS01} are terminating while
\(\cO\)~\cite{Klop07,weko_60631_1,DBLP:conf/rta/EndrullisZ15} and
\(\cS\)~\cite{DBLP:journals/iandc/Waldmann00} are not.

The following lemma allows us to consider
only the case of weakly innermost normalizing instead of terminating
since the rule of sole combinatory calculus is orthogonal.
\begin{lemm}[{\cite[Theorem 11]{DBLP:books/sp/ODonnell77}}]\label{lem:win-sn}
An orthogonal TRS \(R\) is terminating
if and only if \(R\) is weakly innermost normalizing.
\end{lemm}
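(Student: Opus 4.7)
The forward direction is immediate: if $R$ is terminating, then every reduction sequence from every term is finite, so in particular any innermost reduction terminates at a normal form, yielding weak innermost normalization.

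For the converse, the plan is to argue by contradiction. Suppose $R$ is weakly innermost normalizing but admits an infinite reduction $t_0 \to_R t_1 \to_R t_2 \to_R \cdots$. By hypothesis, fix an innermost reduction $t_0 \mathbin{\toi_R}^* u$ with $u \in \NF(R)$ of length $n$. The strategy is to project each step $t_i \to_R t_{i+1}$ across this innermost reduction by means of the parallel moves lemma for orthogonal TRSs: whenever $s_1 \mathrel{{}_R{\leftarrow}} s \to_R s_2$ arises from non-overlapping redexes, the two reductions can be closed into a confluence diamond via residuals. Iterating this construction along both dimensions, one shows that each $t_i$ still reduces in at most $n$ steps to $u$; since $u$ is a normal form, the whole tower of projections forces the infinite reduction $t_0 \to_R t_1 \to_R \cdots$ to collapse to a finite one, a contradiction. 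Orthogonality is essential here because it guarantees that residuals of redexes are well-defined and remain pairwise non-overlapping after each projection step, so the diamond can always be closed.

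The main obstacle is the choice of measure that makes this projection-based induction go through. A single non-innermost step in the infinite reduction may duplicate subterms that themselves contain innermost redexes, so the naive quantity "length of the shortest innermost reduction to normal form" need not decrease along $t_i \to_R t_{i+1}$. The standard remedy, which lies at the heart of O'Donnell's argument, is to track the innermost reduction lengths at each residual position in a multiset or lexicographic ordering and to show that this refined measure strictly decreases under projection. Verifying this decrease requires a careful residual analysis that relies crucially on the orthogonality hypothesis and is the only nontrivial ingredient beyond the parallel moves lemma itself.
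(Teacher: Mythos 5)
The paper offers no proof of this lemma at all---it is imported verbatim as Theorem~11 of O'Donnell's 1977 book---so your attempt has to stand on its own. The forward direction is fine (just note that a term with no innermost redex is in normal form, so any maximal innermost reduction, finite by termination, ends in $\NF(R)$). The converse, however, has a genuine gap, and you have in effect flagged it yourself: the whole content of the theorem is the well-founded measure that is supposed to decrease along the hypothetical infinite reduction, and you never construct it. Two concrete reasons why the sketch as written does not close. First, the conclusion you aim for---``each $t_i$ reduces to $u$ in at most $n$ steps''---is not by itself a contradiction with $t_0\to_R t_1\to_R\cdots$ being infinite; in the orthogonal system $\{f(x)\to a,\ b\to b\}$ every term of the infinite reduction $f(b)\to f(b)\to\cdots$ reaches a normal form in one step. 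Second, the projection degrades in exactly the place where innermostness must be exploited: a redex of the infinite reduction that sits inside an argument erased by the normalizing reduction has an empty set of residuals, so the projection silently discards those steps (this is precisely how the $f(b)$ example survives projection), and after one projection the reduction $t_1\to^* u$ supplied by the parallel moves lemma is in general no longer innermost, so it bears no controlled relation to the innermost normalizing reduction from $t_1$ that your measure is supposed to track. Naming a ``multiset or lexicographic ordering on residual positions'' without defining it and verifying its decrease is deferring the entire theorem.

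For what it is worth, the standard proofs do not go by projecting over the normalizing reduction. One first shows WIN $\Rightarrow$ SIN: in an orthogonal system, distinct innermost redex occurrences sit at pairwise parallel positions, so innermost reduction has the diamond property and all maximal innermost reductions from a term have the same length and endpoint. One then shows SIN $\Rightarrow$ SN (Gramlich's theorem for non-overlapping systems) by taking a subterm-minimal non-terminating term and converting its infinite reduction into an infinite \emph{innermost} reduction, normalizing the arguments of the contracted root redex first; minimality is what handles the erasure problem that defeats the naive projection. I would either carry out that argument in full or simply cite the result, as the paper does.
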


The readers might recall a similar result shown by Church~\cite{10.5555/1096495} that
an orthogonal non-erasing TRS \(R\) is terminating
if and only if \(R\) is weakly normalizing.
Since we are concerned with non-erasing combinatory calculus,
this result may seem more convenient.
However, in the context of the present work, we intend to employ the above lemma
because the `innermost' condition plays a crucial role in our method
which will be detailed later.

\section{Disproving termination of combinators by tree automata}
\label{sec:disproof}
Endrullis and Zantema have proposed a procedure 
for disproving weakly and strongly normalizing
by finding tree automata 
that disprove termination of arbitrary forms of left-linear TRSs.
This section first explains how to disprove termination 
using the search for tree automata,
and shows that it is sufficient to find a restricted form of tree automata
to disprove the termination of non-erasing combinators.
Then, we present how to find such tree automata with a SAT solver.
Finally, 
we show the non-termination of 8 combinators with our implementation of the method.

\subsection{Disproving termination with tree automata}
The idea of disproving termination of a TRS \(R\) by Endrullis and Zantema
is to find a non-empty set of reducible ground terms (i.e., not in normal form),
which is closed under the rules in \(R\).
It is easy to see that the existence of such a set implies 
that \(R\) is not weakly normalizing
because any reduction from a term in the set is always infinite.
Endrullis and Zantema considered the case 
where the set is recognizable by a tree automaton as defined below,
though they did not give a name to it.
\begin{defi}\label{def:tdta}
  A \emph{\TDAlong (\TDA)} for a left-linear TRS \(R\) is
  a tree automaton \(A=\tuple{Q, \Sigma, F, \Delta}\) such that
  \begin{inparaenum}[(\ref{def:tdta}-1)]
  \item\label{item:ta-non-empty}%
  there exists a state \(q\in F\) that is reachable,
  \item\label{item:ta-no-nf}%
  \(\Lang(A)\cap\NF(R)=\emptyset\), and
  \item\label{item:ta-R-closed}%
  \(t\in\Lang(A, q)\) implies \(u\in\Lang(A, q)\)
  for all \(q\in Q\) and \(t, u\in\Term{\Sigma}\) with \(t\to_R u\).
  \end{inparaenum}
\end{defi}
\newcommand\reftdta[1]{(\ref{def:tdta}-\ref{item:ta-#1})}

It is easy to see that the set \(\Lang(A)\) for a \TDA \(A\) for \(R\)
can disprove weak normalizability of \(R\):
the conditions \reftdta{non-empty} and \reftdta{no-nf}
allow us to choose a term not in normal form,
and the conditions \reftdta{no-nf} and \reftdta{R-closed}
force any reduction from the term to be infinite.
The condition \reftdta{R-closed} is a bit strong in the sense that
it requires every set \(\Lang(A,q)\) with \(q\in Q\)
to be closed under reductions in \(R\).
The last condition might be relaxed to require the closure property
only for final states in \(Q_F\).
However, we require it for all states in \(Q\)
to make the SAT solver--based disproof search easier
as done by Endrullis and Zantema.
%
%

As we will see later,
it is sufficient to find a \TDA with a sink state as the final state.
The restricted form of tree automata is defined as follows.
\begin{defi}\label{def:tdtas}
  A \emph{\TDASlong (\TDAS)} for an orthogonal TRS \(R\) is
  a tree automaton \(A=\tuple{Q, \Sigma, \{q_F\}, \Delta}\) 
  with \(q_F\in Q\) such that
  \begin{inparaenum}[(\ref{def:tdtas}-1)]
  \item\label{item:tas-non-empty}%
  \(q_F\) is sink and reachable,
  \item\label{item:tas-no-nf}%
  \(\Lang(A)\cap\NF(R)=\emptyset\), and
  \item\label{item:tas-R-closed}%
  \(t\in\Lang(A,q)\) implies \(u\in\Lang(A,q)\cup\Lang(A)\)
  for all \(q\in Q\) and \(t, u\in\Term{\Sigma}\) with \(t\toi_R u\).
  \end{inparaenum}
\end{defi}
\newcommand\reftdtas[1]{(\ref{def:tdtas}-\ref{item:tas-#1})}
We assume here that \(R\) is orthogonal, which is stronger than left-linear,
because every sole combinatory calculus is orthogonal.
A major difference from \TDAs is that
a \TDAS has a sink state \(q_F\) as the unique final state.
The sink state is reachable so that \(\Lang(A)\) is non-empty.
In addition,
it requires the closure property only under the innermost reduction
and allows the reduction onto \(\Lang(A)\).
The latter relaxation corresponds to a minor improvement 
done by Endrullis and Zantema
where the closure property allows the reduction onto
terms accepted by the `larger' states
in the total order over states,
which will be used for reachability checking with \Cref{lem:non-empty}.
The following theorem states that
the existence of a \TDAS for an orthogonal TRS \(R\)
implies that \(R\) is not terminating.
%
%
\begin{theo}\labelx{thr:tdta}
  An orthogonal TRS \(R\) is not terminating
  if a \TDAS for \(R\) exists.
\end{theo}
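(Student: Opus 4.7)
The plan is to deduce non-termination from failure of weak innermost normalization via \Cref{lem:win-sn}, which applies since $R$ is orthogonal. Concretely, I aim to exhibit an infinite innermost rewrite sequence $t_0 \toi_R t_1 \toi_R t_2 \toi_R \cdots$ in which every $t_i$ lies in $\Lang(A)$; since \reftdtas{no-nf} gives $\Lang(A)\cap\NF(R)=\emptyset$, no such sequence can terminate in a normal form, so $R$ is not weakly innermost normalizing and hence, by \Cref{lem:win-sn}, not terminating.

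I build the sequence by induction on $i$. For the base case, \reftdtas{non-empty} says $q_F$ is reachable, so $\Lang(A,q_F)$ is nonempty; pick any $t_0$ in it, and note $t_0\in\Lang(A)$ since $q_F$ is the only final state, i.e.\ $\Lang(A)=\Lang(A,q_F)$. For the inductive step, assume $t_i\in\Lang(A)=\Lang(A,q_F)$. By \reftdtas{no-nf}, $t_i\notin\NF(R)$, so $t_i$ contains a redex; choosing any subterm-minimal one yields an innermost step $t_i\toi_R t_{i+1}$. Applying \reftdtas{R-closed} at $q=q_F$ then gives $t_{i+1}\in\Lang(A,q_F)\cup\Lang(A)=\Lang(A)$, closing the induction.

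The main structural point to verify carefully is that the relaxed clause in \reftdtas{R-closed} — which lets the reduct $u$ escape $\Lang(A,q)$ provided it lands somewhere in $\Lang(A)$ — causes no trouble here: because $q_F$ is the unique final state, the `escape' set $\Lang(A)$ coincides with the invariant $\Lang(A,q_F)$, so a single invocation of \reftdtas{R-closed} at $q=q_F$ suffices at every step. I do not anticipate a real obstacle beyond this observation; the argument is a short induction combining the three defining clauses of \TDAS with \Cref{lem:win-sn}, and the only ingredient outside the definition is the elementary fact that any non-normal-form term admits at least one innermost rewrite step.
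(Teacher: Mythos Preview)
Your argument has the right ingredients and is close to the paper's, but the framing contains a small logical slip. Exhibiting \emph{one} infinite innermost sequence does not by itself refute weak innermost normalization: WIN only requires that \emph{some} innermost sequence from each term reach a normal form, so a single non-normalizing run out of $t_0$ is still compatible with $t_0$ being weakly innermost normalizing. What you actually need---and what your inductive step in fact delivers, since \reftdtas{R-closed} at $q=q_F$ applies to \emph{every} innermost step out of a term in $\Lang(A,q_F)$---is that $\Lang(A)$ is closed under $\toi_R$; then \emph{no} innermost sequence from $t_0$ can leave $\Lang(A)$, hence none reaches $\NF(R)$, and $t_0$ witnesses failure of WIN. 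The paper establishes exactly this closure, phrased as a shortest-counterexample contradiction rather than your forward induction, but the content is the same. Alternatively, you can drop \Cref{lem:win-sn} altogether: the infinite innermost sequence you build is already an infinite $\to_R$-sequence, which directly witnesses non-termination.
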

\begin{proof}
Let \(A=\tuple{Q,\Sigma,\{q_F\},\Delta}\) be a \TDAS 
for an orthogonal TRS \(R\).
By \Cref{lem:win-sn}, 
it suffices to show that \(R\) is not weakly innermost normalizing.
We first prove by contradiction that
\begin{quote}
  (*) for all \(t\in\Lang(A)\), 
  no reduction sequence from \(t\) yields a normal form,
\end{quote}
which disproves the weak innermost normalizability of \(R\)
due to the non-emptiness of \(\Lang(A)\)
implied by the~\reftdtas{non-empty} condition of \(A\).
Suppose that
there exists a term such that
a reduction sequence starting from the term yields a normal form.
Let \(t\) be such a term which has the shortest reduction sequence 
\(t=t_0\toi_R t_1\toi_R\cdots\toi_R t_n\) with a normal form \(t_n\).
From the~\reftdtas{no-nf} condition, 
\(t\) is not a normal form, i.e., \(n>0\).
From the~\reftdtas{R-closed} condition with \(q=q_F\),
\(t_1\in\Lang(A)\) holds.
This contradict the assumption that
the reduction sequence from \(t\) is the shortest one.
Therefore, (*) holds.
\end{proof}

We will try to find a \TDAS (which has exactly one final state)
to disprove the termination of a sole combinatory calculus
instead of a \TDA.
Since the disproof search will be done by fixing the number of states
and increasing it iteratively,
we have to show that
the number of states of a \TDAS is not required to be as large as that of a \TDA.
Note that we do not need to find a \TDAS equivalent to a \TDA if it exists.
It is well-known that
every non-deterministic tree automaton 
can be converted into an equivalent one that has exactly one final state
(by introducing a fresh final state)
but it may have one more state than the original one.
The following lemma guarantees that
the number of states of a \TDAS is not required to be larger than that of a \TDA
to disprove the termination of an orthogonal TRS.
The proof idea is 
to construct a \TDAS \(A\) from a given \TDA \(A_0\)
by forcing a final state of \(A_0\) to be the final sink state of \(A\)
and removing states that accept only terms
whose subterm is accepted by the sink state.
%
\begin{lemm}\labelx{lem:tdta-sink}
  Let \(R\) be an orthogonal TRS.
  If a \TDA \(A_0=\tuple{Q_0,\Sigma,F_0,\Delta_0}\) for \(R\) exists,
  then a \TDAS \(A=\tuple{Q,\Sigma,\{q_F\},\Delta}\) for \(R\) also exists
  with \(|Q|\le|Q_0|\).
\end{lemm}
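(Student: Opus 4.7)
The plan is to build $A$ by starting from $A_0$, choosing any reachable final state $q_F\in F_0$ (which exists by \reftdta{non-empty}), and modifying the transition set so that $q_F$ becomes a sink while keeping $Q\subseteq Q_0$. Concretely, define
\[
\Delta = \{f(q_1,\dots,q_n)\tto q \in \Delta_0 : \forall i.\, q_i\ne q_F\} \,\cup\, \{f(q_1,\dots,q_n)\tto q_F : n\ge 1,\, \exists i.\, q_i = q_F\}
\]
and put $A=\tuple{Q_0,\Sigma,\{q_F\},\Delta}$, optionally discarding states that become unreachable as the hint suggests. The state bound $|Q|\le|Q_0|$ is then immediate, and $q_F$ is a sink by construction.

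The verification will rest on two language-inclusion facts comparing $A$ to $A_0$. (i) For every $q\ne q_F$, $\Lang(A,q)\subseteq\Lang(A_0,q)$, since any $\Delta$-derivation terminating in a non-sink state cannot traverse $q_F$ and therefore uses only rules inherited from $\Delta_0$. (ii) For every $q\in Q_0$, $\Lang(A_0,q)\subseteq\Lang(A,q)\cup\Lang(A,q_F)$, proved by taking any $\Delta_0$-parse of $t$, reassigning to $q_F$ every subterm whose original state equals $q_F$ or contains such a subterm, and checking that each transition used remains in $\Delta$---either an inherited rule with no $q_F$ input, or a new sink rule. With (i) and (ii) in hand, \reftdtas{non-empty} is immediate from (ii) applied to any $A_0$-witness of $q_F$'s reachability, and \reftdtas{no-nf} follows because any $t\in\Lang(A,q_F)$ must contain an innermost $q_F$-producing subterm $s\in\Lang(A_0,q_F)\subseteq\Lang(A_0)$, which by $A_0$'s \reftdta{no-nf} is not a normal form; hence neither is $t$.

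The hard part is \reftdtas{R-closed}. Given $t=C_0[l\alpha]\toi_R C_0[r\alpha]=u$ and $t\in\Lang(A,q)$, fix a $\Delta$-parse of $t$ and let $q_l$ be the state it assigns to $l\alpha$. If $q_l\ne q_F$, inclusion (i) puts $l\alpha$ in $\Lang(A_0,q_l)$, so $A_0$'s R-closure yields $r\alpha\in\Lang(A_0,q_l)$, and inclusion (ii) then gives $r\alpha\in\Lang(A,q_l)\cup\Lang(A,q_F)$. The delicate case is $q_l=q_F$, where $\Lang(A,q_F)$ is in general strictly larger than $\Lang(A_0,q_F)$ owing to the new sink rules, blocking a direct invocation of $A_0$'s closure on $l\alpha$. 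Here the innermost hypothesis becomes essential: the innermost $q_F$-producing subterm $s$ of $l\alpha$ lies in $\Lang(A_0)$ and is therefore not in $\NF(R)$, but every proper subterm of $l\alpha$ is in $\NF(R)$ by the $\toi$ condition, forcing $s=l\alpha$ and hence $l\alpha\in\Lang(A_0,q_F)$; $A_0$'s R-closure followed by (ii) then gives $r\alpha\in\Lang(A,q_F)$. In either case, splicing $r\alpha$ back into the pre-existing parse of the context $C_0$---and propagating $q_F$ up to the root via the new sink rules whenever $r\alpha$ reaches $q_F$---produces a $\Delta$-parse of $u$ ending at $q$ or $q_F$, yielding $u\in\Lang(A,q)\cup\Lang(A)$ as required. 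The main obstacle is precisely this $q_l=q_F$ subcase, where the inflation of $\Lang(A,q_F)$ over $\Lang(A_0,q_F)$ must be compensated for by the innermost restriction that orthogonality (via \Cref{lem:win-sn}) lets us impose.
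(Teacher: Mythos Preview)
Your proof is correct and follows the same sink-construction idea as the paper, but with a welcome simplification: you keep all of $Q_0$ and only redirect transitions that touch $q_F$, whereas the paper additionally deletes every state $q$ with $\Lang(A_0,q)\subseteq\{C[t]:C\in\Ctx{\Sigma},\,t\in\Lang(A_0,q_F)\}$. That extra pruning forces the paper to work with an auxiliary language $L_F$ and prove three language relations (their \refLF{posLF}, \refLF{midLF}, \refLF{negLF}), while your two inclusions (i) and (ii) suffice. For the \reftdtas{R-closed} verification in the final-state case, the paper argues over the whole term $t$---locating a subterm $t_0\in\Lang(A_0,q_F)$ and casing on whether the innermost redex lies inside $t_0$ or is disjoint from it---whereas you argue directly at the contracted redex $l\alpha$, using minimality of a $q_F$-labelled subterm in the parse to force $s=l\alpha$; both arguments hinge on the same interaction between the innermost hypothesis and \reftdta{no-nf}. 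The paper's more aggressive state removal is not needed for the stated bound $|Q|\le|Q_0|$, so your streamlining costs nothing for this lemma.
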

\begin{proof}
Let \(A_0=\tuple{Q_0,\Sigma,F_0,\Delta_0}\) be a \TDA
for an orthogonal TRS \(R\),
where \(A_0\) satisfies 
the conditions \reftdta{non-empty}, \reftdta{no-nf}, and \reftdta{R-closed}.
Without loss of generality, all states in \(Q_0\) are reachable;
otherwise, we could remove unreachable states from \(Q_0\).
From \reftdta{non-empty} of \(A_0\), the set \(F_0\) is not empty,
hence we can choose a final state \(q_F\in F_0\).
Let \(L_F\subseteq\Term{\Sigma}\) and \(Q_F\subseteq Q_0\) be defined by
\(L_F = \{ C[t] \mid C\in\Ctx{\Sigma},~ t\in\Lang(A_0,q_F) \}\) and
\(Q_F = \{ q\in Q_0 \mid \Lang(A_0,q)\subseteq L_F\}\), respectively.
Note that \(q_F\in Q_F\) holds in particular.
Then we define a tree automaton
\(A=\tuple{Q,\Sigma,\{q_F\},\Delta}\) 
with \(Q=(Q_0\setminus Q_F)\cup\{q_F\}\) and \(\Delta=\Delta_1\cup\Delta_2\)
where
\begin{align*}
  \Delta_1 &=
  \{ f(q_1,\dots,q_n)\tto q \in\Delta_0 \mid 
     \{q_1,\dots,q_n\}\subseteq Q_0\setminus Q_F,~ q\in Q \}~
  \text{and}\\
  \Delta_2 &=
  \{ f(q_1,\dots,q_n)\tto q_F \mid f\in\Sigma^{(n)},~ 
     q_F\in\{q_1,\dots,q_n\}\subseteq Q \}
  \end{align*}
so that \(q_F\) is to be a sink final state in \(A\).
Since \(|Q|\le|Q_0|\) obviously holds, 
it suffices to show that \(A\) is a \TDAS for \(R\).

Before proving that \(A\) is a \TDAS for \(R\), we show 
\begin{enumerate}[~(\ref{lem:tdta-sink}-1)]
  \item\label{tdta-posLF}
  \(t\in L_F\) if and only if \(t\in\Lang(A, q_F)\),
  \item\label{tdta-midLF}
  \(t\in\Lang(A, q)\) implies \(t\in\Lang(A_0, q)\)
  for all \(q\in Q_0\setminus Q_F\), and
  \item\label{tdta-negLF}
  \(t\in\Lang(A_0, q)\) implies \(t\in\Lang(A, q)\cup\Lang(A,q_F)\)
  for all \(q\in Q_0\setminus Q_F\),
\end{enumerate}
\newcommand\refLF[1]{(\ref{lem:tdta-sink}-\ref{tdta-#1})}
for all \(t\in\Term{\Sigma}\).
These statements can be shown by simultaneous induction on the structure of \(t\).
Suppose that \(t=f(t_1,\dots,t_n)\) with
\(f\in\Sigma^{(n)}\) and \(t_1,\dots,t_n\in\Term{\Sigma}\).
On the~\refLF{posLF} statement,
we examine two cases according to whether \(t_i\in L_F\) or not for each \(i\).
In the case where \(t_i\in L_F\) holds for some \(1\le i\le n\),
the `if'-statement of \refLF{posLF} is obvious
from the definition of \(L_F\),
hence we show the `only if'-statement.
Since \(t_i\in\Lang(A,q_F)\) holds from the induction hypothesis, 
we have \(t\in\Lang(A,q_F)\) using a transition rule in \(\Delta_2\).
Thus, the `only if'-statement of \refLF{posLF} also holds.
In the case where \(t_i\not\in L_F\) holds for all \(1\le i\le n\),
we first show the `if'-statement of \refLF{posLF}.
Assume \(t\in\Lang(A,q_F)\) holds.
Then, there exists \(q_1,\dots,q_n\in Q\) such that
\(f(q_1,\dots,q_n)\tto q_F\in\Delta\) and
\(t_i\in\Lang(A,q_i)\) for every \(1\le i\le n\).
If \(q_i=q_F\) holds for some \(1\le i\le n\),
then we have \(t_i\in L_F\) from the induction hypothesis,
hence \(t\in L_F\).
If \(q_i\in Q_0\setminus Q_F\) holds for all \(1\le i\le n\),
then the transition rule is in \(\Delta_1\)
and we have \(t_i\in\Lang(A_0,q_i)\) from the induction hypothesis of \refLF{midLF}
for all \(i\).
Using the same transition rule, we have \(t\in\Lang(A_0,q_F)\), hence \(t\in L_F\).
Therefore, the `if'-statement of \refLF{posLF} holds.
For the `only if'-statement of \refLF{posLF},
assume \(t\in L_F\).
Since \(t_i\not\in L_F\) holds for all \(1\le i\le n\),
we have \(t\in\Lang(A_0,q_F)\) owing to the definition of \(L_F\).
Then, there exists \(q_1,\dots,q_n\in Q_0\) such that
\(f(q_1,\dots,q_n)\tto q_F\in\Delta_0\) and
\(t_i\in\Lang(A_0,q_i)\) for every \(1\le i\le n\).
Note that \(t_i\not\in L_F\) implies \(q_i\in Q_0\setminus Q_F\)
by the definition of \(Q_F\).
Thus, the transition rule is in \(\Delta_1\) and
we have \(t_i\in\Lang(A,q_i)\cup\Lang(A,q_F)\)
from the induction hypothesis of \refLF{negLF}.
When \(t_i\in\Lang(A,q_i)\) holds for all \(i\),
we have \(t\in\Lang(A,q_F)\) using the same transition rule.
When \(t_i\in\Lang(A,q_F)\) holds for some \(1\le i\le n\),
we have \(t\in\Lang(A,q_F)\) using the transition rule in \(\Delta_2\).
Therefore, the `only if'-statement of \refLF{posLF} holds.
%

On the~\refLF{midLF} statement,
assume \(t\in\Lang(A,q)\) with \(q\in Q_0\setminus Q_F\),
that is, \(q\ne q_F\).
Then, there exist \(q_1,\dots,q_n\in Q\) such that
\(f(q_1,\dots,q_n)\tto q\in\Delta\) and
\(t_i\in\Lang(A,q_i)\) for every \(1\le i\le n\).
Note that \(q_i\ne q_F\) holds for all \(1\le i\le n\) 
because of the construction of \(\Delta\).
Since we have \(q_i\in Q_0\setminus Q_F\) for every \(1\le i\le n\),
the transition rule is in \(\Delta_1\) and
\(t_i\in\Lang(A_0,q_i)\) holds from the induction hypothesis
for every \(1\le i\le n\).
Using the same transition rule, we have \(t\in\Lang(A_0,q)\).
Therefore, \refLF{midLF} holds.
%

On the~\refLF{negLF} statement,
assume \(t\in\Lang(A_0,q)\) with \(q\in Q_0\setminus Q_F\).
Then,
there exist \(q_1,\dots,q_n\in Q_0\) such that
\(f(q_1,\dots,q_n)\tto q\in\Delta_0\) and
\(t_i\in\Lang(A_0,q_i)\) for every \(1\le i\le n\).
When \(q_i\in Q_F\) holds for some \(1\le i\le n\),
we have \(t_i\in L_F\) by the definition of \(Q_F\),
hence \(t_i\in\Lang(A,q_F)\) holds from the induction hypothesis of \refLF{posLF}.
Using a transition rule in \(\Delta_2\), we have \(t\in\Lang(A,q_F)\).
When \(q_i\in Q_0\setminus Q_F\) holds for all \(1\le i\le n\),
the transition rule is in \(\Delta_0\) and
we have \(t_i\in\Lang(A,q_i)\cup\Lang(A,q_F)\) from the induction hypothesis
for all \(1\le i\le n\).
If \(t_i\in\Lang(A,q_i)\) holds for all \(i\),
then we have \(t\in\Lang(A,q)\) using the same transition rule.
If \(t_i\in\Lang(A,q_F)\) holds for some \(i\),
then we have \(t\in\Lang(A,q_F)\) using the transition rule in \(\Delta_2\).
Therefore, \refLF{negLF} holds.

Now we are ready to show that \(A\) is a \TDAS for \(R\).
Concerning the~\reftdtas{non-empty} condition,
the set \(L_F\) is not empty since \(q_F\) is reachable in \(A_0\)
due to the~\reftdta{non-empty} condition of \(A_0\).
Then we have \(q_F\) is reachable also in \(A\) owing to \refLF{posLF}.
In addition, \(q_F\) is a sink state in \(A\), 
hence the~\reftdtas{non-empty} condition holds for \(A\).

The~\reftdtas{no-nf} condition is shown by contradiction.
Suppose that there exists a term \(t\in\Lang(A)\cap\NF(R)\).
Then we have \(t\in L_F\) from \refLF{posLF},
hence \(t=C[t_0]\) holds
for some \(C\in\Ctx{\Sigma}\) and \(t_0\in\Lang(A_0,q_F)\).
From \(t_0\in\Lang(A_0,q_F)\subseteq\Lang(A_0)\) 
and the~\reftdta{no-nf} condition of \(A_0\),
we have \(t_0\not\in\NF(R)\),
and thus \(t=C[t_0]\not\in\NF(R)\) holds.
This contradicts the assumption,
hence the~\reftdtas{no-nf} condition holds for \(A\).

Concerning the~\reftdtas{R-closed} condition,
assume that we have 
\(t\in\Lang(A,q)\) with \(q\in Q\)
and \(t\toi_R u\) for some \(u\in\Term{\Sigma}\).
In the case of \(q=q_F\),
we have \(t\in L_F\) by \refLF{posLF},
hence \(t=C[t_0]\) holds for some \(C\in\Ctx{\Sigma}\) and \(t_0\in\Lang(A_0,q_F)\).
Since the \(t_0\) is not in normal form by the~\reftdta{no-nf} condition of \(A_0\)
and \(\toi_R\) is an innermost relation,
we have either \(u=C[u_0]\) with \(t_0\toi_R u_0\in\Term{\Sigma}\)
or \(u=D[t_0]\) with some \(D\in\Ctx{\Sigma}\).
In the former case,
we have \(u_0\in\Lang(A_0,q_F)\) 
by \(t_0\in\Lang(A_0,q_F)\) and the~\reftdta{R-closed} condition of \(A_0\).
Then \(u_0\in\Lang(A,q_F)\) holds 
due to \(u_0\in\Lang(A_0,q_F)\subseteq L_F\) and \refLF{posLF}.
Since \(q_F\) is a sink state in \(A\), \(u\in\Lang(A,q_F)\) holds.
In the latter case,
\(u\in\Lang(A,q_F)\) holds 
because 
\(t_0\in\Lang(A_0,q_F)\subseteq L_F\) implies \(t_0\in\Lang(A,q_F)\) by \refLF{posLF}
and
\(q_F\) is a sink state in \(A\).
Therefore, the~\reftdtas{R-closed} condition holds for \(A\) in the case of \(q=q_F\).
In the case of \(q\in Q_0\setminus Q_F\),
we have \(t\in\Lang(A_0,q)\) by \refLF{midLF},
hence \(u\in\Lang(A_0,q)\) holds by the~\reftdta{R-closed} condition of \(A_0\).
Since we have \(u\in\Lang(A,q)\cup\Lang(A,q_F)\) by \refLF{negLF},
the~\reftdtas{R-closed} condition holds for \(A\)
in the case  of \(q\in Q_0\setminus Q_F\).
\end{proof}

Since we try to find a \TDAS in the ascending order of the number of states
as Endrullis and Zantema have done for a \TDA,
one of the \TDASs for a given TRS with the smallest number of states
can be found if it exists.
It might be possible to find a \TDAS even with a smaller number of states
than a \TDA because of the relaxed closure property~\reftdtas{R-closed}.
Our experiment results in~\cref{sec:results} do not show such a case, though.

\subsection{SAT encoding of termination-disproving tree automata}
Endrullis and Zantema showed that
the problem of finding \TDAs can be reduced to
the boolean satisfiability problem (SAT, for short)
by fixing the number of states of tree automata.
Although we essentially follow their method,
we will present a method which can find a \TDAS
more efficiently 
because of the restriction of its form.
%
The efficiency of the disproof search is improved 
not only by finding a \TDAS instead of a \TDA
but also by specializing their method 
for non-erasing sole combinatory calculus.


We present our SAT encoding method to be self-contained,
where the differences from the method by Endrullis and Zantema
(\emph{EZ method}, for short)
are explicitly explained for each step.
We first explain problem settings and definitions used in our method,
introduce propositional variables and their meaning,
and then show propositional formulas over them that must hold.

\subsubsection{Problem setting and definitions}
Let \(Z\) be a non-erasing combinator
whose termination is to be disproved.
Recall that the reduction rule of \(Z\) is
represented by a singleton TRS \(R_Z\) 
over \(\Sigma_Z=\{Z^{(0)}, \A^{(2)}\}\).
Let \(l_Z \to r_Z\) be the unique rule of \(R_Z\)
where 
\( l_Z = \A(\dots\A(\A(Z, x_1), x_2)\dots, x_{M_Z})\)
with some \(M_Z\geq1\) and
\(r_Z\) is built from the binary function symbol \(\A\) 
and variables \(x_1,x_2,\dots,x_{M_Z}\)
so that \(\FVar(l_Z)=\FVar(r_Z)\) holds for \(R_Z\) to be non-erasing.
We write \(U_Z\) for the set \(\tsub{l_Z}\cup\tsub{r_Z}\).
The \emph{left depth} \(\ldepth(t)\) of a \(Z\)-term \(t\in\Term{\Sigma_Z}\)
is defined by
\(\ldepth(Z) = 0\) and
\(\ldepth(\A(t_1,t_2)) = 1 + \ldepth(t_1)\).
The left depth of a term is useful
in determining whether the term is redex or not.
A \(Z\)-term \(t\) has the left depth \(M_Z\) if and only if
\(t\) is a redex of \(R_Z\).

Let \(A=\tuple{Q,\Sigma,\{q_F\},\Delta}\) 
be a \TDAS with a final sink state \(q_F\in Q\),
which is to be found if it exists.
We fix the number of states as \(|Q|=N\)
in our encoding.
We iteratively ask the SAT solver to find a \TDAS
increasing \(N\) one by one,
starting with \(N=M_Z+1\)
because no automaton with less than \(M_Z+1\) states can recognize
the existence of a redex of \(R_Z\).
We use a function \(\minld_A: Q\to\Nat\) 
defined by
\(\minld_A(q)=\min_{t\in\Lang(A,q)}\ldepth(t)\).
The function is total
because the \TDAS to be found has only reachable states.


\subsubsection{Propositional variables}
Our SAT encoding involves 
three classes of propositional variables.
The first one has the form of
either \(v_{\Aa(q_1,q_2)\tto q}\) or \(v_{Z\tto q}\)
with \(q_1, q_2, q\in Q\),
which identify \(\Delta\).
These variables are expected to satisfy
\begin{quote}
\(v_{\Aa(q_1,q_2)\tto q}\) is true iff \(\A(q_1,q_2)\tto q\in\Delta\),
for all \(q_1, q_2, q\in Q\), and
\\[1ex]
\(v_{Z\tto q}\) is true iff \(Z\tto q\in\Delta\),
for all \(q\in Q\).
\end{quote}
This class of variables has been employed in the EZ method.

The second class of propositional variables has the form of
either \(v_{q,m}\) 
with \(q\in \QF\) and \(m<M_Z\)
or \(v_{q,\redex}\) 
with \(q\in Q\).
They are expected to satisfy
\begin{quote}
\(v_{q, m}\) is true iff 
\(\displaystyle m = \minld_A(q)\),
for all \(q\in \QF\) and \(m < M_Z\), and
\\[1ex]
\(v_{q,\redex}\) is true only if \(\Lang(A,q)\cap\NF(R_Z)=\emptyset\),
for all \(q\in Q\).
\end{quote}
This class of variables is newly introduced in our method
in order to check the existence of redexes.
Instead,
the EZ method employs 
propositional variables that represent
reachability of states of a tree automaton
obtained by product construction of two tree automata, 
\(A\) and \(B\),
where \(B\) accepts all terms in normal form of \(R_Z\).

The third class of propositional variables has the form of
\(v_{t,\alpha,q}\) 
with a term \(t\in U_Z\), 
a substitution \(\alpha:\Var(t)\to \QF\), and a state \(q\in Q\).
Note that the number of possible substitutions \(\alpha\) is finite
because \(U_Z\) and \(Q\) are finite.
Therefore the number of this class of variables is also finite.
These variables are expected to satisfy
\begin{quote}
\(v_{t,\alpha,q}\) is true iff \(t\alpha\tto^*q\)
\end{quote}
for all \(t\in U_Z\), \(\alpha:\Var(t)\to \QF\), and \(q\in Q\).
This class of variables has been employed in the EZ method,
with the difference of the domain of substitutions.
In their method,
each substitution is defined over the set of all states
including final states,
while our encoding excludes the final state from the domain of substitutions.
The difference makes the number of this class of variables much smaller,
which will reduce the number of clauses 
passed to the SAT solver.
We will explain later
the reason why the final state can be left out in our encoding method.

Besides the three classes of propositional variables,
our implementation of the proposed encoding method employs
extra variables which is equivalent to conjunction of the other variables.
They are introduced in order for the number of clauses to be smaller
using a standard technique 
of Tseitin transformation~\cite{Tseitin1983},
where sub-formulas are replaced by new propositional variables
to avoid exponential brow-up of the number of clauses%
.
The original method by Endrullis and Zantema may also have used the technique
thought it is not explicitly mentioned in their article.
Their implementation is currently no longer available,
so we cannot be sure how they actually do it.

\subsubsection{Propositional formulas}
Recall that 
a \TDAS \(A=\tuple{Q,\Sigma_Z,\{q_F\},\Delta}\)
with \(|Q|=N\) is to be found by the SAT solver.
The Boolean values of the propositional variables introduced so far
exactly identify \(\Delta\).
It must also be ensured, however, 
that there is no inconsistency in the valuation of propositional variables,
and that \(A\) satisfies the \TDAS conditions.
The propositional formulas to be satisfied will be shown in sequence
condition by condition.
It is assumed that they will eventually be combined together by conjunction
and passed to the SAT solver to find an appropriate valuation.

Firstly, 
the final state \(q_F\) must be sink according to the~\reftdtas{non-empty} condition.
Hence the following propositional formulas must hold:
\begin{equation}
\bigwedge_{q_F\in\{q_1,q_2\}\subseteq Q}
\bigg(
  v_{\Aa(q_1,q_2)\tto q_F}
  \land
  \bigwedge_{q\in\QF}
  \lnot v_{\Aa(q_1,q_2)\tto q}
\bigg)
\end{equation}
All of these propositional variables 
are immediately forced to be assigned to either true or false
in the phase of unit propagation of the SAT solver.
In addition, the~\reftdtas{non-empty} condition requires that
\(q_F\) is reachable.
Since we will find a \TDAS with the smallest number of states,
we assume that all states in \(Q\) are reachable.
The reachability of states can be paraphrased as
the existence of a total order on \(Q\) with appropriate properties
as considered in the EZ method.
We will employ the following lemma to specify the total order.
%
The proof is similar to that of the EZ method
except that the final state is fixed as a sink state in our setting.
\begin{onlyproc}%
The details of the proof are found
in the full version of this paper~\cite{DBLP:journals/corr/Nakano24ciaafull}.
\end{onlyproc}

\begin{lemm}\labelx{lem:non-empty}
Let \(A=\tuple{Q,\Sigma_Z,\{q_F\},\Delta}\) be a tree automaton
with a sink state \(q_F\).
Then, all states in \(Q\) are reachable if and only if
there exists a total order \(<\) on \(Q\)
with maximal element \(q_F\)
such that 
for every \(q\in Q\) 
there exists either
\(Z\tto q\in\Delta\) 
or
\(\A(q_1,q_2)\tto q\in\Delta\) with \(q_1<q\) and \(q_2<q\).
\end{lemm}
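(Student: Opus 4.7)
My plan is to prove the two directions separately, with the sink property of $q_F$ playing a central role in the construction of the total order.

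For the (easier) \emph{if}-direction, I would argue by induction along the given total order $<$. The minimum element $q_{\min}$ of $Q$ cannot satisfy the clause with a binary transition (there is no state strictly smaller than $q_{\min}$), so it must admit $Z\tto q_{\min}\in\Delta$, hence $Z\tto^*q_{\min}$ and $q_{\min}$ is reachable. For an arbitrary $q\in Q$, assume inductively that every $q'<q$ is reachable, witnessed by some $t_{q'}\in\Term{\Sigma_Z}$ with $t_{q'}\tto^* q'$. Then either $Z\tto q\in\Delta$ (so $q$ is trivially reachable) or there is $\A(q_1,q_2)\tto q\in\Delta$ with $q_1,q_2<q$, in which case $\A(t_{q_1},t_{q_2})\tto^* q$ shows reachability.

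For the \emph{only if}-direction, the main observation I would establish first is that reachability of any non-final state $q\neq q_F$ is witnessed by a derivation that avoids $q_F$ entirely. Indeed, if some subterm reached $q_F$ during a bottom-up computation, the sink property of $q_F$ would force every enclosing application to rewrite to $q_F$ as well, contradicting the fact that the whole term rewrites to $q\neq q_F$. Consequently, for $q\neq q_F$, I define $h(q)$ as the minimum size of a term $t$ with $t\tto^* q$ using only states from $\QF$; this minimum exists because $q$ is reachable. Standard analysis of the shape of such a minimum term shows that $h(q)=1$ forces $Z\tto q\in\Delta$, while $h(q)>1$ forces $\A(q_1,q_2)\tto q\in\Delta$ for some $q_1,q_2\in\QF$ with $h(q_1),h(q_2)<h(q)$.

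I would then construct the required total order by sorting the states of $\QF$ in strictly increasing order of $h$, with arbitrary tie-breaking, and placing $q_F$ as the maximum. By the previous paragraph, every $q\in\QF$ receives either a unary or a binary transition from strictly smaller states, as demanded. It remains to verify the same for $q_F$ itself: choose a smallest term $t$ with $t\tto^* q_F$. If $t=Z$, then $Z\tto q_F\in\Delta$ and we are done. Otherwise $t=\A(t_1,t_2)$ with some transition $\A(q_1,q_2)\tto q_F\in\Delta$ and $t_i\tto^* q_i$; if either $q_i=q_F$, the subterm $t_i$ would be a smaller witness for reachability of $q_F$, contradicting minimality, so $q_1,q_2\in\QF$ and both are strictly below $q_F$ in the chosen order.

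The main obstacle I anticipate is justifying rigorously the step that allows $q_F$ to be forced to the top of the order, since a priori some non-final state might have $h$-value larger than $h(q_F)$. The cleanest way around this is precisely the sink-based restriction of the derivations for $q\neq q_F$ to $\QF$, so that the ordering on $\QF$ never needs to reference $q_F$ at all, and the placement of $q_F$ at the top is cost-free.
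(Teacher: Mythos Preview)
Your proof is correct and follows essentially the same strategy as the paper's: both directions hinge on the sink property of \(q_F\) to ensure that any run reaching a state \(q\neq q_F\) stays entirely inside \(\QF\), which lets one order \(\QF\) first and then place \(q_F\) on top. The only difference is cosmetic: for the \emph{only if}-direction you sort \(\QF\) by the height function \(h\) (minimum witness size), whereas the paper builds the order greedily, starting from \(Q_0=\{q\in\QF\mid Z\tto q\in\Delta\}\) and repeatedly appending any \(q\in\QF\) with a transition \(\A(q_1,q_2)\tto q\) whose children are already listed. Both constructions yield a valid order for the same reason, and your explicit height function arguably makes the verification of the transition condition for each \(q\in\QF\) a bit more transparent.
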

%
\begin{onlyfull}
\begin{proof}
Firstly, the `if'-statement is shown.
Suppose that there exists an unreachable state.
Let \(q\in Q\) be the smallest unreachable state in the order \(<\).
From the assumption, there exists either \(Z\tto q\in\Delta\) or
\(\A(q_1,q_2)\tto q\in\Delta\) with \(q_1<q\) and \(q_2<q\).
In the former case, \(q\) is reachable, which is a contradiction.
In the latter case, 
if both \(q_1\) and \(q_2\) are reachable,
then \(q\) is reachable, which is a contradiction;
if either \(q_1\) or \(q_2\) is unreachable,
then it contradicts the minimality of \(q\).
Therefore, all states in \(Q\) are reachable.

Secondly, the `only if'-statement is shown.
Suppose that all states in \(Q\) are reachable.
When \(Q\) is a singleton set of \(q_F\), the statement holds trivially,
hence we assume that \(Q\) contains at least two states.
Let \(Q_0=\{q\in Q\setminus\{q_F\} \mid Z\tto q\in\Delta\}\).
Then \(Q_0\) is not empty, 
otherwise no state in \(Q\setminus\{q_F\}\) is reachable
due to the sink condition of \(q_F\).
We can arbitrarily define an order \(<\) on \(Q_0\),
and thus let \(\{q_1,\dots,q_{|Q_0|}\}\) be the set \(Q_0\)
with \(q_i<q_{i+1}\) for \(1\le i<|Q_0|\).
Starting from \(Q_0\), 
repeatedly adding a state \(q (\ne q_F)\) not in the set 
but of \(\A(q_1,q_2)\tto q\in\Delta\) with \(q_1\) and \(q_2\) in the set
eventually yields \(Q\setminus\{q_F\}\)
due to the reachability of all states.
The iteration gives a sequence of sets of states as
\(Q_0=Q_{|Q_0|}\subsetneq Q_{|Q_0|+1}\subsetneq\dots
\subsetneq Q_{|Q|-1}=Q\setminus\{q_F\}\),
where adjacent sets differ only by one element.
Let \(\{q_i\}\) be \(Q_i\setminus Q_{i-1}\) for every \(|Q_0|<i<|Q|\)
and let \(q_i<q_{i+1}\) be defined for all \(|Q_0|\le i<|Q|-1\).
In addition, we define \(q_{|Q|-1}<q_F\)
because of the reachability of \(q_F\).
Thereby, the order \(<\) obviously satisfies the condition
to conclude the `only if'-statement.
\end{proof}
\end{onlyfull}

\Cref{lem:non-empty} makes it easy to find a \TDAS
in which all states are reachable.
Without loss of generality,
we can fix an ordered sequence of all states in \(Q\)
as \(p_1<p_2<\dots<p_N=q_F\) (recall \(|Q|=N\))
and force the states to satisfy the property in \Cref{lem:non-empty}.
The restriction is directly encoded by
\begin{equation}
\bigwedge_{q\in Q}\;
\bigg(
v_{Z\tto q}
\lor
\bigvee_{q_1, q_2<q}\;
v_{\Aa(q_1,q_2)\tto q}
\bigg)
\text,
\end{equation}
which has been employed
also in the EZ method,
although our encoding differs in that 
the ordered sequence of states should end with the unique final state.


Secondly, \(A\) must satisfy the~\reftdtas{no-nf} condition,
which requires that \(A\) accepts no terms in normal form.
Since \(\Lang(A)=\Lang(A,q_F)\),
a propositional formula
\begin{equation}
v_{q_F,\redex}
\end{equation}
(which is just a unit clause) should hold.
To ensure that the Boolean value of this variable is valid,
all of the propositional variables of the form
\(v_{q,m}\) 
or \(v_{q,\redex}\) 
should be properly assigned as follows.
%
The propositional variables \(v_{q, m}\) for \(q\in\QF\) and \(m<M_Z\)
are expected to satisfy \(m = \minld_A(q)\).
It is easy to see that \(\minld_A(q)\) can be effectively computed 
for each \(q\in Q\) from the transition rules in \(\Delta\).
The next lemma claims this fact as a logical statement
so as to be used for giving appropriate propositional formulas.
\begin{onlyproc}
The proof is found in the full paper~\cite{DBLP:journals/corr/Nakano24ciaafull}.
\end{onlyproc}
%
\begin{lemm}\labelx{lem:ldepth}
Let \(A=\tuple{Q,\Sigma_Z,\{q_F\},\Delta}\) be a tree automaton
with a combinator \(Z\)
where all states in \(Q\) are reachable.
Then for every \(q\in Q\) and \(m\in\Nat\),
\begin{inparaenum}[~(\ref{lem:ldepth}-1)]
\item\label{item:ld-zero}%
\(\minld_A(q)=0\) if and only if \(Z\tto q\in\Delta\), and
\item\label{item:ld-succ}%
\(\minld_A(q)=m>0\) if and only if
there is neither \(Z\tto q\in\Delta\) nor \(\A(q_1,q_2)\tto q\in\Delta\)
with \(\minld_A(q_1)<m-1\)
and
there exists \(\A(q_1,q_2)\tto q\in\Delta\) with \(\minld_A(q_1)=m-1\).
\end{inparaenum}
\end{lemm}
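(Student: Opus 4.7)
The plan is to unfold the definitions and proceed by elementary case analysis, using reachability to ensure every $\minld_A(q)$ is a genuine (finite) minimum. The key observation is that $\ldepth$ partitions $\Term{\Sigma_Z}$ very cleanly: the only term of left depth $0$ is $Z$ itself, and a term has left depth $m > 0$ precisely when it has the form $\A(t_1, t_2)$ with $\ldepth(t_1) = m-1$ (with $t_2$ arbitrary). Hence computing $\minld_A(q)$ reduces to inspecting which transition rules produce $q$.

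For statement \refLF*{ld-zero}, I would argue directly: $\minld_A(q) = 0$ holds iff some $t \in \Lang(A,q)$ has $\ldepth(t) = 0$ iff $Z \in \Lang(A,q)$, and the latter is equivalent to $Z \tto q \in \Delta$ since the only way for the constant $Z$ to reach a state is by a transition rule $Z \tto q$.

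For statement \refLF*{ld-succ}, the ``only if'' direction splits into two obligations. First, $\minld_A(q) = m > 0$ forbids $Z \tto q \in \Delta$ (else \refLF*{ld-zero} would give $\minld_A(q) = 0$) and forbids any $\A(q_1,q_2) \tto q \in \Delta$ with $\minld_A(q_1) < m-1$, since from a witness $t_1$ for $\minld_A(q_1)$ and any $t_2 \in \Lang(A,q_2)$ (which exists by reachability of $q_2$) we could assemble a term $\A(t_1,t_2) \in \Lang(A,q)$ of left depth $\ldepth(t_1) + 1 < m$, contradicting minimality. Second, since $\minld_A(q) = m$ is attained by some $t \in \Lang(A,q)$, that $t$ must be of the form $\A(t_1,t_2)$ with $\ldepth(t_1) = m-1$, arising from a transition $\A(q_1,q_2) \tto q \in \Delta$ where $t_i \in \Lang(A,q_i)$; then $\minld_A(q_1) \le m-1$, and combined with the preceding bound this forces $\minld_A(q_1) = m-1$.

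The ``if'' direction is symmetric: given a transition $\A(q_1,q_2) \tto q \in \Delta$ with $\minld_A(q_1) = m-1$, pick a witness $t_1 \in \Lang(A,q_1)$ of that left depth and any $t_2 \in \Lang(A,q_2)$ (again using reachability); then $\A(t_1,t_2) \in \Lang(A,q)$ gives $\minld_A(q) \le m$. The hypotheses $Z \tto q \notin \Delta$ and the absence of any $\A(q_1',q_2') \tto q \in \Delta$ with $\minld_A(q_1') < m-1$ together prevent $\minld_A(q)$ from being strictly smaller than $m$, again by case analysis on the top-level transition producing a putative shorter witness. The only delicate point — and the closest thing to an obstacle — is remembering to invoke reachability of $q_2$ (resp.\ $q_2'$) when one needs a term on the right-hand side of an application; without the standing assumption that all states in $Q$ are reachable, $\minld_A$ would not even be well-defined, and the construction of witnesses above would fail.
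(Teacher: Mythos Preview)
Your argument is correct and essentially identical to the paper's own proof: both unfold the definitions, use reachability of the right-hand state to assemble witness terms of the required left depth, and obtain the lower bound by case analysis on the top transition of a putative shorter witness. The only cosmetic difference is that the paper treats the ``if'' direction before the ``only if'' direction and explicitly notes that $Z\tto q\notin\Delta$ yields $m>0$, which you leave implicit.
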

\newcommand\refminld[1]{(\ref{lem:ldepth}-\ref{item:ld-#1})}
%
\begin{onlyfull}
\begin{proof}
The~\refminld{zero} statement is obvious by the definitions of \(\minld\) and \(\ldepth\).
For the~\refminld{succ} statement,
we split the latter part into three parts:
\begin{enumerate}[~(\ref{lem:ldepth}-\ref{item:ld-succ}-1)]
\item\label{mld-succ-Z}
  \(Z\tto q\not\in\Delta\),
\item\label{mld-succ-lt}
  \(\A(q_1,q_2)\tto q\not\in\Delta\) 
  for all \(q_1,q_2\in Q\) with \(\minld_A(q_1)<m-1\), and
\item\label{mld-succ-eq}
  \(\A(q_1,q_2)\tto q\in\Delta\) 
  for some \(q_1,q_2\in Q\) with \(\minld_A(q_1)=m-1\).
\end{enumerate}
\newcommand\refmldsucc[1]{(\ref{lem:ldepth}-\ref{item:ld-succ}-\ref{mld-succ-#1})}
Firstly, we show the `if'-statement of \refminld{succ}.
Let \(q\in Q\) and \(m\in\Nat\) be fixed
and suppose that \refmldsucc{Z}, \refmldsucc{lt}, and \refmldsucc{eq} hold.
By \refmldsucc{eq},
there exists \(t_1\in\Lang(A,q_1)\) such that 
\(\A(q_1,q_2)\tto q\in\Delta\) and \(\ldepth(t_1)=m-1\).
By taking \(t_2\in\Lang(A,q_2)\) because of the reachability of \(q_2\),
we have \(\A(t_1,t_2)\in\Lang(A,q)\) whose left depth is \(m\),
hence \(\minld_A(q)\le m\).
Assume that \(\minld_A(q)<m\).
Then there exists \(t\in\Lang(A,q)\) such that \(\ldepth(t)<m\).
Since we have \(t\ne Z\) by \refmldsucc{Z},
\(t=\A(t_1,t_2)\) holds with some \(t_1,t_2\in\Term{\Sigma_Z}\).
Then, there exists \(q_1,q_2\in Q\) such that
\(\A(q_1,q_2)\tto q\in\Delta\), \(t_1\in\Lang(A,q_1)\) and \(t_2\in\Lang(A,q_2)\).
From \(\ldepth(t)<m\), we have \(\ldepth(t_1)<m-1\),
which contradicts \refmldsucc{lt}.
Therefore, \(\minld_A(q)=m\) holds.
Due to \refmldsucc{Z}, we have \(\minld_A(q)>0\),
hence the `if'-statement of \refminld{succ} holds.
%

Next, we show the `only if'-statement of \refminld{succ}.
Suppose that \(\minld_A(q)=m>0\) holds with \(q\in Q\) and \(m\in\Nat\).
The~\refmldsucc{Z} statement is immediate from \(\minld_A(q)>0\).
We prove the~\refmldsucc{lt} statement by contradiction.
Suppose that there exists \(\A(q_1,q_2)\tto q\in\Delta\) 
such that \(\minld_A(q_1)<m-1\).
Then we have \(t_1\in\Lang(A,q_1)\) with \(\ldepth(t_1)<m-1\).
By taking \(t_2\in\Lang(A,q_2)\) because of the reachability of \(q_2\),
we have \(t=\A(t_1,t_2)\in\Lang(A,q)\) with \(\ldepth(t)<m\),
which contradicts the assumption of \(\minld_A(q)=m\).
The~\refmldsucc{eq} statement is shown as follows.
From the assumption of \(\minld_A(q)=m\),
there must be \(t\in\Lang(A,q)\) such that \(\ldepth(t)=m\).
Since \(m>0\),
we have \(t=\A(t_1,t_2)\) with some \(t_1,t_2\in\Term{\Sigma_Z}\).
Then there exists \(\A(q_1,q_2)\tto q\in\Delta\) such that
\(t_1\in\Lang(A,q_1)\) and \(t_2\in\Lang(A,q_2)\).
Because of \(\ldepth(t)=m\), we have \(\ldepth(t_1)=m-1\),
hence \(\minld_A(q_1)\le m-1\).
Therefore, we have \(\minld_A(q_1)=m-1\) from \refmldsucc{lt},
and thus \refmldsucc{eq} holds,
which concludes the `only if'-statement of \refminld{succ}.
\end{proof}
\end{onlyfull}

We only need the lemma for non-final states 
though it holds even for the final state \(q_F\)
because the propositional variables \(v_{q,m}\) are given only for \(q\in\QF\).
The statement \refminld{zero} indicates that
\(v_{q,0}\) for each \(q\in\QF\) has
an appropriate Boolean value 
by the following propositional formula:
\begin{equation}
\bigwedge_{q\in\QF}
\left(v_{q,0} \Iff v_{Z\tto q}\right)\text.
\end{equation}
We could use the same propositional variable for \(v_{q,0}\) and \(v_{Z\tto q}\)
in an efficient implementation, though.
Additionally,
the statement \refminld{succ} indicates that
the propositional variable \(v_{q,m}\) with \(m>0\) has 
an appropriate Boolean value 
by the following propositional formula:
\begin{equation}
\bigwedge_{q\in\QF}\,
\bigwedge_{m < M_Z}\,
\left(v_{q,m}\Iff
\left(
\lnot v_{Z\tto q}
\land P_1(q,m)
\land P_2(q,m)
\right)
\right)
\end{equation}
where 
\begin{align*}
P_1(q,m) & = 
\bigwedge_{0\le k< m-1}
\bigwedge_{q_1,q_2\in\QF}
\left(v_{\Aa(q_1,q_2)\tto q}\To \lnot v_{q_1,k}\right)\text{ and}
\\
P_2(q,m) & = 
\bigvee_{q_1,q_2\in\QF}
\left(v_{\Aa(q_1,q_2)\tto q}\land v_{q_1,m-1}\right)\text.
\end{align*}

The propositional variable \(v_{q,\redex}\) for each \(q\in Q\)
is expected to be true 
only if \(\Lang(A,q)\cap\NF(R_Z)=\emptyset\).
The next lemma is used for giving propositional formulas
in order for the variables \(v_{q,\redex}\) 
to have an appropriate Boolean value.
%
%
%

\begin{lemm}\labelx{lem:redex}
Let \(A=\tuple{Q,\Sigma_Z,\{q_F\},\Delta}\) be a tree automaton
with a sink state \(q_F\),
let \(R_Z\) be a TRS with a combinator \(Z\),
and let \(\Qrdx\subseteq Q\) 
be a set of states such that for all \(q\in\Qrdx\),
\begin{inparaenum}[~(\ref{lem:redex}-1)]
  \item\label{item:rdx-leaf}%
  \(Z\tto q\not\in\Delta\) holds, and
  \item\label{item:rdx-node}%
  \(\minld_A(q_1) = M_Z-1\) holds
  if there is \(\A(q_1,q_2)\tto q\in\Delta\) with \(q_1,q_2\in Q\setminus\Qrdx\).
\end{inparaenum}
Then, \(\Lang(A,q)\cap\NF(R_Z)=\emptyset\) holds for every \(q\in\Qrdx\).
\end{lemm}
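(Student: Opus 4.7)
The plan is to prove by structural induction on $t$ that $t \in \Lang(A, q)$ with $q \in \Qrdx$ implies $t \notin \NF(R_Z)$, i.e., that $t$ contains a redex of $R_Z$ as a subterm. Before starting the induction I would isolate an elementary structural fact about $Z$-terms: any $t \in \Term{\Sigma_Z}$ with $\ldepth(t) \ge M_Z$ contains a redex, because following the left spine of $t$ for $M_Z$ steps gives a subterm of the form $\A(\dots\A(\A(Z, s_1), s_2)\dots, s_{M_Z})$, which is exactly an instance of $l_Z$. This reduces each inductive step to producing either a redex in a proper subterm or a lower bound $\ldepth(t) \ge M_Z$.

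For the base case, $t = Z$ cannot be accepted by any $q \in \Qrdx$ by condition~(\ref{lem:redex}-\ref{item:rdx-leaf}). For the inductive step, write $t = \A(t_1, t_2)$ and pick a transition $\A(q_1, q_2) \tto q \in \Delta$ witnessing the acceptance, with $t_i \in \Lang(A, q_i)$ for $i \in \{1,2\}$. If $q_1 \in \Qrdx$ or $q_2 \in \Qrdx$, the induction hypothesis applied to the corresponding $t_i$ yields a redex inside $t_i$, which is then also a subterm of $t$. Otherwise $q_1, q_2 \in Q \setminus \Qrdx$, and condition~(\ref{lem:redex}-\ref{item:rdx-node}) forces $\minld_A(q_1) = M_Z - 1$. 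Since $\minld_A(q_1)$ is by definition a lower bound on $\ldepth(t_1)$, this gives $\ldepth(t) = 1 + \ldepth(t_1) \ge M_Z$, and the preliminary structural fact produces a redex inside $t$.

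The argument is largely routine; the main subtle point I would highlight is the role of condition~(\ref{lem:redex}-\ref{item:rdx-node}), which only constrains $\minld_A$ of the left component of a transition and says nothing about $q_2$ or about $t_1$ beyond its left depth. Linking this transition-level hypothesis to a concrete redex in $t$ relies on two bridging observations: $\minld_A(q_1) \le \ldepth(t_1)$, immediate from the definition of $\minld_A$, and the characterisation of redexes of $R_Z$ as precisely the $Z$-terms whose left depth equals $M_Z$. Neither the sink status of $q_F$ nor the reachability of states enters this proof, so no auxiliary lemma beyond this elementary left-spine analysis is needed.
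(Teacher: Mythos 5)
Your proof is correct and follows essentially the same route as the paper's: structural induction on $t$, with the base case killed by condition~(\ref{lem:redex}-\ref{item:rdx-leaf}), the case $q_1\in\Qrdx$ or $q_2\in\Qrdx$ handled by the induction hypothesis, and the remaining case reduced via condition~(\ref{lem:redex}-\ref{item:rdx-node}) and $\minld_A(q_1)\le\ldepth(t_1)$ to the bound $\ldepth(t)\ge M_Z$. Your explicit left-spine justification that such a bound forces a redex is a detail the paper leaves implicit (it only records the equality case $\ldepth(t)=M_Z$ earlier), but it is the same argument.
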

\newcommand\refrdx[1]{(\ref{lem:redex}-\ref{item:rdx-#1})}

\begin{onlyfull}
\begin{proof}
  The conclusion of the statement can be rephrased as follows:
  for every \(t\in\Lang(A,q)\) with \(q\in\Qrdx\), 
  we have \(t\not\in\NF(R_Z)\).
  The statement is shown by induction on \(t\in\Term{\Sigma_Z}\).
  In the case of \(t=Z\), it immediately holds because of \refrdx{leaf},
  which implies there are no \(q\in\Qrdx\) such that \(Z\in \Lang(A,q)\).
  In the case of \(t=\A(t_1,t_2)\in\Lang(A,q)\) with some \(q\in\Qrdx\),
  there is \(\A(q_1,q_2)\tto q\in\Delta\) 
  such that \(t_1\in\Lang(A,q_1)\) and \(t_2\in\Lang(A,q_2)\) hold.
  If we have either \(q_1\in\Qrdx\) or \(q_2\in\Qrdx\),
  then \(t_1\not\in\NF(R_Z)\) or \(t_2\not\in\NF(R_Z)\) holds from the induction hypothesis.
  It indicates \(t\not\in\NF(R_Z)\) because \(t\) contains a redex of \(R_Z\).
  If we have \(q_1,q_2\in Q\setminus\Qrdx\),
  \(\minld_A(q_1)=M_Z-1\) holds by \refrdx{node}.
  It indicates \(\ldepth(t_1)\ge M_Z-1\), hence \(\ldepth(t)\ge M_Z\).
  Thus, \(t\not\in\NF(R_Z)\) holds.
\end{proof}
\end{onlyfull}
\begin{onlyproc}%
The proof is found in the full paper~\cite{DBLP:journals/corr/Nakano24ciaafull}.
\end{onlyproc}
Let \(\Qrdx\subseteq Q\) be the set of states \(q\)
such that \(v_{q,\redex}\) is true.
\Cref{lem:redex} guarantees that
\(v_{q,\redex}\) is true only if \(\Lang(A,q)\cap\NF(R_Z)=\emptyset\)
whenever \(\Qrdx\) satisfies the conditions \refrdx{leaf} and \refrdx{node},
that is, the following propositional formulas should be true:
for \refrdx{leaf},
\begin{equation}
  \bigwedge_{q\in Q}\left(
v_{q,\redex} \To \lnot v_{Z\tto q}
  \right)\text;
\end{equation}
for \refrdx{node}, 
\begin{equation}
\bigwedge_{q\in Q}\,
\bigwedge_{q_1,q_2\in \QF}\left(
v_{q,\redex}\land v_{\Aa(q_1,q_2)\tto q}
\land\lnot v_{q_1,\redex}\land\lnot v_{q_2,\redex}\To v_{q_1,M_Z-1}
\right)\text.
\end{equation}

Finally, we need to ensure that \(A\) is almost closed under \(R_Z\)
in the sense of the~\reftdta{R-closed} condition.
As the EZ method does,
we use the following lemma which gives a procedure
to have the condition.
While the EZ method employs the existing results%
~\cite[Proposition 12]{DBLP:conf/rta/Genet98},
we give a proof of this lemma
\begin{onlyproc}%
in the full paper~\cite{DBLP:journals/corr/Nakano24ciaafull}
\end{onlyproc}
because the~\reftdtas{R-closed} condition is different from the closure property.
\begin{lemm}\labelx{lem:closed}
  Let \(A=\tuple{Q, \Sigma, \{q_F\}, \Delta}\) be a tree automaton
  with a sink state \(q_F\),
  and let \(R\) be a left-linear TRS over \(\Sigma\).
  Then, \(s\in\Lang(A,q)\) implies \(t\in\Lang(A,q)\cup\Lang(A)\)
  for all \(q\in Q\) and \(s,t\in\Term{\Sigma}\) with \(s\toi_R t\)
  if \(l\alpha\tto_\Delta^* q\) implies \(r\alpha\tto_\Delta^* q\) or 
  \(r\alpha\tto_\Delta^* q_F\)
  for all \(q\in Q\), \(l\to r\in R\) and \(\alpha:\FVar(l)\to Q\).
\end{lemm}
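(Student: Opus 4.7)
The plan is to analyze an accepting run of $A$ witnessing $s \tto_\Delta^* q$, extract from it a state substitution corresponding to the term substitution used in the rewrite step $s \toi_R t$, apply the hypothesis to get a reduction target in either $q$ itself or $q_F$, and finally propagate that result back through the surrounding context. The \emph{innermost} restriction in $\toi_R$ is not used in an essential way; the argument works for any left-linear $R$-step.

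More concretely, I would write $s = C[l\alpha']$ and $t = C[r\alpha']$ for some $C \in \Ctx{\Sigma}$, rule $l\to r\in R$, and term substitution $\alpha' : \FVar(l) \to \Term{\Sigma}$. Fix an accepting run for $s \tto_\Delta^* q$; along that run, $l\alpha'$ is reduced to some intermediate state $q''$ with $C[q''] \tto_\Delta^* q$, and each subterm $\alpha'(x_i)$ is reduced to some state $q_i$. Because $l$ is left-linear, each variable $x_i \in \FVar(l)$ occurs exactly once in $l$, so the assignment $\alpha_Q : x_i \mapsto q_i$ is unambiguous and we obtain $l\alpha_Q \tto_\Delta^* q''$. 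The hypothesis then yields either $r\alpha_Q \tto_\Delta^* q''$ or $r\alpha_Q \tto_\Delta^* q_F$. Since $\alpha'(x_i) \tto_\Delta^* q_i = \alpha_Q(x_i)$ holds for every variable, we can reduce each occurrence of $\alpha'(x_i)$ in $r\alpha'$ independently (variables may occur several times in $r$, but that causes no difficulty), giving $r\alpha' \tto_\Delta^* r\alpha_Q$.

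Plugging this back into the context gives two cases. If $r\alpha' \tto_\Delta^* q''$, then $t = C[r\alpha'] \tto_\Delta^* C[q''] \tto_\Delta^* q$ by reusing the context portion of the original accepting run, so $t \in \Lang(A,q)$. If instead $r\alpha' \tto_\Delta^* q_F$, we use the sink property of $q_F$: any symbol placed above a state $q_F$ rewrites to $q_F$, so $C[q_F] \tto_\Delta^* q_F$, and therefore $t \in \Lang(A,q_F) = \Lang(A)$.

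The only really delicate step is the extraction of the well-defined state substitution $\alpha_Q$ from the accepting run, and this is where left-linearity of $l$ is indispensable: with a non-left-linear rule, two occurrences of the same variable in $l$ could be assigned to different states, making the hypothesis (which is stated for substitutions $\alpha : \FVar(l) \to Q$) inapplicable. The rest is routine compositional reasoning about tree automaton runs, with the sink property of $q_F$ serving only to absorb the context $C$ in the second case.
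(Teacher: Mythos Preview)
Your proposal is correct and takes essentially the same approach as the paper: extract a state substitution from the run using left-linearity, apply the hypothesis at the redex, and use the sink property of $q_F$ to absorb the surrounding context in the $q_F$ case. The only organizational difference is that the paper carries out an explicit structural induction on the context $C$ (peeling off one function symbol at a time and invoking the induction hypothesis), whereas you appeal directly to the standard run-decomposition fact that $C[l\alpha'] \tto_\Delta^* q$ factors through some intermediate state $q''$ with $l\alpha' \tto_\Delta^* q''$ and $C[q''] \tto_\Delta^* q$; this is of course equivalent, and your observation that the innermost restriction plays no role in the argument is also accurate.
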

\begin{onlyfull}
\begin{proof}
Let \(s\in\Lang(A,q)\) with \(s\toi_R t\).
By definition, there exists \(C\in\Ctx{\Sigma}\) 
such that \(s=C[l\sigma]\) and \(t=C[r\sigma]\)
with \(l\to r\in R\) and \(\sigma:\FVar(l)\to\Term{\Sigma}\).
We show the statement by induction on the structure of \(C\).
In the case of \(C=\hole\), 
we have \(s = l\sigma\tto^* q\) and \(t = r\sigma\).
Let \(\alpha: \FVar(l)\to Q\) be a function defined
so that \(\sigma(x)\tto^*\alpha(x)\) 
is used in the derivation of \(l\sigma\tto^* q\).
Note that \(\alpha\) is well-defined because of the left-linearity of \(R\).
Then \(l\alpha\tto^* q\) holds,
hence \(r\alpha\tto^* q\) or \(r\alpha\tto^* q_F\) from the assumption.
Therefore, \(r\sigma\tto^* q\) or \(r\sigma\tto^* q_F\) holds,
which indicates \(t\in\Lang(A,q)\cup\Lang(A)\).
In the case of \(C=f(t_1,\dots,C',\dots,t_n)\)
where \(C'\) is the \(j\)-th child of \(f\in\Sigma^{(n)}\) with \(1\le j\le n\),
there exists \(q_1,\dots,q_n\in Q\)
such that \(f(q_1,\dots,q_n)\tto q\in\Delta\),
\(t_i\in\Lang(A,q_i)\) for \(1\le i\le n\) with \(i\ne j\), and
\(C'[l\sigma]\in\Lang(A,q_j)\).
From \(C'[l\sigma]\toi_R C'[r\sigma]\) and the induction hypothesis,
we have \(C'[r\sigma]\in\Lang(A,q_j)\cup\Lang(A)\).
When \(C'[r\sigma]\in\Lang(A,q_j)\),
we have \(t=f(t_1,\dots,C'[r\sigma],\dots,t_n)\in\Lang(A,q)\)
using the same transition rule.
When \(C'[r\sigma]\in\Lang(A)=\Lang(A,q_F)\),
we have \(t=f(t_1,\dots,C'[r\sigma],\dots,t_n)\in\Lang(A,q_F)=\Lang(A)\)
because \(q_F\) is sink.
Therefore, \(t\in\Lang(A,q)\cup\Lang(A)\) holds.
\end{proof}
\end{onlyfull}


Recall that the propositional variable \(v_{t,\alpha,q}\) is true
when a term \(t\) is accepted with \(q\in Q\)
if every variable \(x\) in \(t\) 
is substituted with a term accepted with a state \(\alpha(x)\).
Following the EZ method, 
the procedure of computing the state of each subterm 
is simulated by imposing relations among the propositional variables.
Unlike their method,
we only need to consider the case where the range of \(\alpha\) is \(\QF\)
since the closure property always holds 
if \(\alpha(x)=q_F\) for some \(x\) because \(q_F\) is sink and \(R_Z\) is non-erasing.
This arrangement would substantially reduce the number of propositional variables.
For each subterm \(t=\A(t_1,t_2)\in U_Z\), we should have
\begin{equation}
  \bigwedge_{q\in Q}\,
  \bigwedge_{\alpha:\FVar(t)\to \QF}\!\!
  \bigg(
  v_{t,\alpha,q} \Iff
  \bigvee_{q_1,q_2\in Q}\!\!\left(
  v_{t_1,\alpha[t_1],q_1} \land v_{t_2,\alpha[t_2],q_2} \land v_{\Aa(q_1,q_2)\tto q}
  \right)\!
  \bigg)
\end{equation}
where \(\alpha[t]\) stands for the substition \(\alpha\) 
whose domain is restricted to \(\FVar(t)\).
For each leaf in the subterms \(U_Z\), we should have
\begin{equation}
\bigwedge_{q\in Q}
\bigg(
\left(
v_{Z,\emptyset,q} \Iff v_{Z\tto q}
\right)
\land
\bigwedge_{x\in\FVar(l_Z)}\,
\Big(
  v_{x,\{x\mapsto q\},q}
  \land
  \bigwedge_{q'\in Q\setminus\{q\}}
  \lnot v_{x,\{x\mapsto q'\},q}
\Big)
\bigg)
\end{equation}
where the former part indicates that
we could use the same propositional variable for
\(v_{Z,\emptyset,q}\) and \(v_{Z\tto q}\).
And then,
the~\reftdtas{R-closed} condition requires 
the propositional formula
\begin{equation}
\bigwedge_{\alpha:\FVar(l_Z)\to Q}\,
\left(
v_{l_Z,\alpha,q} \To v_{r_Z,\alpha,q}\lor v_{r_Z,\alpha,q_F}
\right)\text.
\end{equation}

%
%
%
%

%
%
%
%
%
%
%
%
%
%

\subsection{Applying to non-termination of specific combinators}
\label{sec:results}
We have shown how to construct propositional logic formulas
for the existence of \TDAS 
with fixed number of states 
for a sole combinatory calculus.
We have implemented the construction 
and input the obtained formulas to a SAT solver 
to examine termination for combinators shown in~\cref{fig:combs},
for which termination is unknown.
We confirmed that our method can efficiently find a \TDAS
compared to the method developed by Endrullis and Zantema for a \TDA.
Note that their implementation is not currently publicly available,
so we have re-implemented it.
For fairness, we have also applied the same optimization 
(e.g., Tseitin transformation)
as in the implementation of our construction.
Our implementation is written in OCaml and uses the Kissat SAT solver.
We also implement the ability to output the smallest term accepted
by the found \TDAS,
thus allowing the output of a non-normalizable term.

The results of the examination for each combinator are 
shown in~\cref{tab:results}.
\begin{onlyfull}%
The details of the \TDAS obtained are given
in~\cref{sec:tdta}
\end{onlyfull}
\begin{onlyproc}%
Due to page limitation,
the details of the \TDAS obtained are found
in the full paper~\cite{DBLP:journals/corr/Nakano24ciaafull}
\end{onlyproc}
with a found counterexample for termination.
Here we only show the number of states of the \TDAS, 
the size of the propositional logic formula used
(number of propositional variables and clauses),
and the computation time (the sum of encoding and solving time)
including the attempts to fewer states.
\begin{table}[t]\centering
  \caption{Results of disproving termination of combinators given in~\cref{fig:combs}.}
  \label{tab:results}
  \newcommand\hdrs[1]{%
    & \mc1{c}{\#\(Q\)} & \mc1{c}{\#vars} 
    & \mc1{c}{\#clauses} & \mc1{c#1}{time (s)}%
  }
  \begin{tabular}{@{~~~}l|r@{~~~}r@{~~~}r@{~~~}r@{~~~}|r@{~~~}r@{~~~}r@{~~~}r@{~~~}}
    \hline
    \multirow{2}{*}{\qquad} &
    \mc4{c|}{EZ method} & \mc4c{Our method} 
    \\\cline{2-9} \hdrs{|} \hdrs{}
    \\\hline 
    \(\cP\) 
    & ~~~4 & 15,648 & 75,237 & 0.71
    & ~~~4 &  7,116 & 33,543 & 0.22
    \\
    \(\cP_3\) 
    & 6 & 161,544 & 793,351 & 11.5
    & 6 &  96,058 & 469,233 & 5.6
    \\
    \(\!\cD_1\) 
    & 6 & 936,810 & 4,594,129 & 97.0
    & 6 & 462,528 & 2,264,943 & 32.6
    \\
    \(\!\cD_2\)
    & 6 & 936,810 & 4,594,129 & 96.6
    & 6 & 462,528 & 2,264,943 & 32.9
    \\
    \(\!\!\:\cPh\) 
    & 7 & 1,975,302 & ~~9,741,173 & 260.6
    & 7 & 1,099,527 & 5,416,581 & 108.3
    \\
    \(\!\!\:\cPh_2\)
    & -- & ---~~ & ---~~ & ---~
    & 9 & 55,695,683 & 276,052,931 & 39,268.4
    \\ 
    \(\cS_1\) 
    & 7 & 1,975,302 & 9,741,173 & 262.1
    & 7 & 1,099,527 & 5,416,581 & 111.6
    \\
    \(\cS_2\) 
    & 6 & 745,002 & 3,655,825 & 81.6
    & 6 & 379,278 & 1,857,693 & 30.3
    \\\hline
  \end{tabular}
\end{table}
%
The `EZ method' column shows the results of (our re-implementation of)
the original method by Endrullis and Zantema;
the `Our method' column shows the results of our construction method
introduce in the present paper.
The EZ method failed to find a \TDA
for \(\cPh_2\) within 24-hour computation.
Note that the number of propositional variables and clauses
is the same for different combinators.
This is because they depend only on the number of states, 
the number of variables in the left-hand side of the combinator's rule
and the number of subterms in the right-hand side.
A major improvement in our method is 
a reduction of the number of variables of the form \(v_{t,\alpha,q}\)
by restricting the range of \(\alpha\),
which is the most significant part of the SAT encoding.
The number of possible substitutions is reduced 
from \(|Q|^{N}\) to \((|Q|-1)^N\),
where \(N\) is the number of variables in the left-hand side of the rule.
Our method can indeed generate less propositional logic formulas
than the original one,
and also succeed in disproving the termination of more combinators.
Both method failed to find the disproof of termination 
for the remaining combinators \(\cS_3\) and \(\cS_4\) in~\cref{fig:combs}.
We have confirmed that their termination cannot be disproven
by a \TDAS with at most 9 states.


%

\section{Concluding remark}
We have proposed a method to disprove the termination of 
sole combinatory calculi with tree automata
by extending the method proposed by Endrullis and Zantema.
Specifically,
we have shown that a tree automaton with a final sink state is sufficient
to disprove termination of non-erasing combinators.
We have succeeded in disproving the termination of 8 combinators
which is unknown of their termination found in Smullyan's book.
The remaining two combinators \(\cS_3\) and \(\cS_4\)
still have unknown termination.
We may need more improvement to disprove their termination.
However,
our method in which
termination is disproved by a tree automaton with a final sink state
may be applicable to other non-erasing term rewriting systems.
It would be interesting to investigate how effectual our method is
in more general settings.

\subsubsection*{Acknowledgments.}
The authors would like to thank the anonymous reviewers for their valuable comments.
This work was partially supported by JSPS KAKENHI Grant Numbers,
JP21K11744, JP22H00520, and JP22K11904.

%
%
\bibliographystyle{splncs04}
\bibliography{refs}

\begin{onlyfull}
\appendix
\newcommand\Dsink[1]{\Delta_{\textsf{sink},#1}}
\section{The obtained \TDASs for combinators}
\label{sec:tdta}
We show the details of \TDASs
for combinators obtained by our implemetation of the method
presented in Section~\ref{sec:disproof}.
The implication gives one of the smallest terms accepted by a \TDAS
as a counterexample to the termination of the combinator.
Here, we show the definition of the \TDAS and a counterexample
for each combinator.
For a sink state \(q_F\),
we write \(\Dsink{q_F}\) for
the set consisting of transition rules \(\A(q_1,q_2)\tto q_F\)
with states \(q_1\) and \(q_2\) one of which is \(q_F\).
We write \(q_1\lor\dots\lor q_n\)
the right-hand side of non-deterministic transition rules
with the same left-hand side.
The counterexample is shown in terms of combinatory calculus 
rather than TRS with \(\A\) for readability.

\subsection{Combinator \(\cP\)}
\label{sec:tdta-P}
Combinator \(\cP\) defined by \(\cP x y z \to z (x y z)\)
is shown to be non-terminating by a \TDAS
\(A_{\cP}=\tuple{\{1,\ab 2,\ab 3,\ab 4\},\Sigma_{\cP},\{4\},\Dsink{4}\cup\Delta}\) where
\(\Delta\) consists of
\begin{align*}
& \cP \tto 1, && \A(1, 1) \tto 1\lor2, && \A(1, 2) \tto 1\lor2,\\
& \A(1, 3) \tto 1\lor2, && \A(2, 1) \tto 1\lor2, && \A(2, 2) \tto 1\lor2\lor3,\\
& \A(2, 3) \tto 1\lor2\lor3, && \A(3, 1) \tto 1\lor2, && \A(3, 2) \tto 1\lor2\lor3, \text{and}\\
& \A(3, 3) \tto 1\lor2\lor3\lor4\text.
\end{align*}
A counterexample of termination is
\(\cP\ab \cP\ab (\cP\ab \cP\ab) (\cP\ab \cP\ab (\cP\ab \cP\ab))\).

\subsection{Combinator \(\cP_3\)}
\label{sec:tdta-P3}
Combinator \(\cP_3\) defined by \(\cP_3 x y z \to y (x z y)\)
is shown to be non-terminating by a \TDAS
\(A_{\cP_3}=\tuple{\{1,\ab 2,\ab 3,\ab 4,\ab 5,\ab 6\},\Sigma_{\cP_3},\{6\},\Dsink{6}\cup\Delta}\) where
\(\Delta\) consists of
\begin{align*}
& \cP_3 \tto 1, && \A(1, 1) \tto 2, && \A(1, 2) \tto 2,\\
& \A(1, 3) \tto 2, && \A(1, 4) \tto 2\lor4, && \A(1, 5) \tto 2\lor3\lor4,\\
& \A(2, 2) \tto 2\lor3, && \A(2, 3) \tto 3\lor4, && \A(2, 4) \tto 2\lor3\lor4,\\
& \A(2, 5) \tto 5, && \A(3, 2) \tto 2\lor3, && \A(3, 3) \tto 3,\\
& \A(3, 4) \tto 2\lor3\lor4\lor5, && \A(3, 5) \tto 2\lor4, && \A(4, 2) \tto 2\lor5,\\
& \A(4, 3) \tto 3, && \A(4, 4) \tto 4\lor5, && \A(4, 5) \tto 5,\\
& \A(5, 2) \tto 6, && \A(5, 3) \tto 2\lor4, && \A(5, 4) \tto 5\lor6, \text{and}\\
& \A(5, 5) \tto 2\lor3\lor5\lor6\text.
\end{align*}
A counterexample of termination is
\(\cP_3\ab \cP_3\ab (\cP_3\ab \cP_3\ab (\cP_3\ab \cP_3\ab)) (\cP_3\ab \cP_3\ab) (\cP_3\ab \cP_3\ab)\).

\subsection{Combinator \(\cD_1\)}
\label{sec:tdta-D1}
Combinator \(\cD_1\) defined by \(\cD_1 x y z w \to x z (y w) (x z)\)
is shown to be non-terminating by a \TDAS
\(A_{\cD_1}=\tuple{\{1,\ab 2,\ab 3,\ab 4,\ab 5,\ab 6\},\Sigma_{\cD_1},\{6\},\Dsink{6}\cup\Delta}\) where
\(\Delta\) consists of
\begin{align*}
& \cD_1 \tto 1, && \A(1, 1) \tto 2, && \A(1, 2) \tto 2, && \A(1, 3) \tto 3,\\
& \A(1, 4) \tto 3, && \A(2, 2) \tto 3, && \A(2, 3) \tto 3, && \A(2, 4) \tto 3,\\
& \A(3, 3) \tto 3\lor4, && \A(3, 4) \tto 3\lor4, && \A(4, 3) \tto 4\lor5, && \A(4, 4) \tto 3, \text{and}\\
& \A(5, 4) \tto 6\text.
\end{align*}
A counterexample of termination is
\(\cD_1\ab \cD_1\ab (\cD_1\ab \cD_1\ab) (\cD_1\ab \cD_1\ab (\cD_1\ab \cD_1\ab)) (\cD_1\ab \cD_1\ab (\cD_1\ab \cD_1\ab))
      (\cD_1\ab \cD_1\ab (\cD_1\ab \cD_1\ab)
      (\cD_1\ab \cD_1\ab (\cD_1\ab \cD_1\ab)))\).

\subsection{Combinator \(\cD_2\)}
\label{sec:tdta-D2}
Combinator \(\cD_2\) defined by \(\cD_2 x y z w \to x w (y z) (x w)\)
is shown to be non-terminating by a \TDAS
\(A_{\cD_2}=\tuple{\{1,\ab 2,\ab 3,\ab 4,\ab 5,\ab 6\},\Sigma_{\cD_2},\{6\},\Dsink{6}\cup\Delta}\) where
\(\Delta\) consists of
\begin{align*}
& \cD_2 \tto 1, && \A(1, 1) \tto 2, && \A(1, 2) \tto 2, && \A(1, 3) \tto 3,\\
& \A(1, 5) \tto 3, && \A(2, 2) \tto 3, && \A(2, 3) \tto 3, && \A(2, 5) \tto 3,\\
& \A(3, 3) \tto 3\lor4, && \A(3, 5) \tto 3\lor4, && \A(4, 3) \tto 5, && \A(4, 5) \tto 3,\\
& \A(5, 3) \tto 1\lor6, \text{and} && \A(5, 5) \tto 1\lor5\lor6\text.
\end{align*}
A counterexample of termination is
\(\cD_2\ab \cD_2\ab (\cD_2\ab \cD_2\ab) (\cD_2\ab \cD_2\ab (\cD_2\ab \cD_2\ab)) (\cD_2\ab \cD_2\ab (\cD_2\ab \cD_2\ab)) (\cD_2\ab \cD_2\ab (\cD_2\ab \cD_2\ab))\).

\subsection{Combinator \(\cPh\)}
\label{sec:tdta-Phi}
Combinator \(\cPh\) defined by \(\cPh x y z w \to x (y w) (z w)\)
is shown to be non-terminating by a \TDAS
\(A_{\cPh}=\tuple{\{1,\ab 2,\ab 3,\ab 4,\ab 5,\ab 6,\ab 7\},\Sigma_{\cPh},\{7\},\Dsink{7}\cup\Delta}\) where
\(\Delta\) consists of
\begin{align*}
& \cPh \tto 1, && \A(1, 1) \tto 2, && \A(1, 3) \tto 4, && \A(1, 4) \tto 2,\\
& \A(1, 6) \tto 4, && \A(2, 2) \tto 3, && \A(2, 6) \tto 3\lor5, && \A(3, 6) \tto 6,\\
& \A(4, 3) \tto 5, && \A(5, 3) \tto 6, && \A(5, 5) \tto 3, && \A(5, 6) \tto 6, \text{and}\\
& \A(6, 6) \tto 7\text.
\end{align*}
A counterexample of termination is
\(\cPh\ab (\cPh\ab \cPh\ab (\cPh\ab \cPh\ab)) (\cPh\ab \cPh\ab (\cPh\ab \cPh\ab)) (\cPh\ab \cPh\ab (\cPh\ab \cPh\ab))
      (\cPh\ab (\cPh\ab \cPh\ab (\cPh\ab \cPh\ab)) (\cPh\ab \cPh\ab (\cPh\ab \cPh\ab))
      (\cPh\ab \cPh\ab (\cPh\ab \cPh\ab)))\).

\subsection{Combinator \(\cPh_2\)}
\label{sec:tdta-Phi2}
Combinator \(\cPh_2\) defined by
\(\cPh_2  x y z  w_1 w_2 \to x  (y w_1 w_2)  (z w_1 w_2)\)
is shown to be non-terminating by a \TDAS
\(A_{\cPh_2}=\tuple{\{1,\ab 2,\ab 3,\ab 4,\ab 5,\ab 6,\ab 7,\ab 8,\ab 9\},\ab\Sigma_{\cPh_2},\ab\{9\},\ab \Dsink{9}\cup\Delta}\) where
\(\Delta\) consists of
\begin{align*}
& \cPh_2 \tto 1, && \A(1, 1) \tto 2, && \A(1, 3) \tto 4, && \A(2, 2) \tto 3,\\
& \A(2, 6) \tto 3, && \A(2, 7) \tto 3\lor6, && \A(3, 7) \tto 3\lor7, && \A(4, 4) \tto 5,\\
& \A(4, 6) \tto 3, && \A(4, 7) \tto 3, && \A(5, 2) \tto 6, && \A(6, 6) \tto 7,\\
& \A(6, 7) \tto 3, && \A(7, 7) \tto 8, \text{and} && \A(8, 7) \tto 4\lor9\text.
\end{align*}
A counterexample of termination is
\(\cPh_2\ab (\cPh_2\ab \cPh_2\ab (\cPh_2\ab \cPh_2\ab)) (\cPh_2\ab (\cPh_2\ab \cPh_2\ab (\cPh_2\ab \cPh_2\ab))) (\cPh_2\ab \cPh_2\ab)
      (\cPh_2\ab (\cPh_2\ab \cPh_2\ab (\cPh_2\ab \cPh_2\ab)) (\cPh_2\ab (\cPh_2\ab \cPh_2\ab (\cPh_2\ab \cPh_2\ab)))
      (\cPh_2\ab \cPh_2\ab))
      (\cPh_2\ab (\cPh_2\ab \cPh_2\ab (\cPh_2\ab \cPh_2\ab)) (\cPh_2\ab (\cPh_2\ab \cPh_2\ab (\cPh_2\ab \cPh_2\ab)))
       (\cPh_2\ab
       \cPh_2\ab)
      (\cPh_2\ab (\cPh_2\ab \cPh_2\ab (\cPh_2\ab \cPh_2\ab)) (\cPh_2\ab (\cPh_2\ab \cPh_2\ab (\cPh_2\ab \cPh_2\ab)))
      (\cPh_2\ab \cPh_2\ab)))
      (\cPh_2\ab (\cPh_2\ab \cPh_2\ab (\cPh_2\ab \cPh_2\ab)) (\cPh_2\ab (\cPh_2\ab \cPh_2\ab (\cPh_2\ab \cPh_2\ab)))
       (\cPh_2\ab
       \cPh_2\ab)
      (\cPh_2\ab (\cPh_2\ab \cPh_2\ab (\cPh_2\ab \cPh_2\ab)) (\cPh_2\ab (\cPh_2\ab \cPh_2\ab (\cPh_2\ab \cPh_2\ab)))
      (\cPh_2\ab \cPh_2\ab)))\).

\subsection{Combinator \(\cS_1\)}
\label{sec:tdta-S1}
Combinator \(\cS_1\) defined by \(\cS_1 x y z w \to x y w (z w)\)
is shown to be non-terminating by a \TDAS
\(A_{\cS_1}=\tuple{\{1,\ab 2,\ab 3,\ab 4,\ab 5,\ab 6,\ab 7\},\Sigma_{\cS_1},\{7\},\Dsink{7}\cup\Delta}\) where
\(\Delta\) consists of
\begin{align*}
& \cS_1 \tto 1, && \A(1, 1) \tto 2, && \A(1, 2) \tto 6,\\
& \A(1, 3) \tto 6, && \A(1, 4) \tto 6, && \A(1, 5) \tto 6,\\
& \A(1, 6) \tto 6, && \A(2, 2) \tto 3\lor6, && \A(2, 3) \tto 3\lor6,\\
& \A(2, 4) \tto 3\lor6, && \A(2, 5) \tto 3\lor6, && \A(2, 6) \tto 3\lor6,\\
& \A(3, 2) \tto 4, && \A(3, 3) \tto 2\lor3\lor6, && \A(3, 4) \tto 4,\\
& \A(3, 5) \tto 4\lor5, && \A(3, 6) \tto 2\lor6, && \A(4, 2) \tto 2\lor3\lor6,\\
& \A(4, 3) \tto 2\lor3\lor5, && \A(4, 4) \tto 2\lor4\lor5, && \A(4, 5) \tto 4\lor5,\\
& \A(4, 6) \tto 3\lor4\lor6, && \A(5, 2) \tto 2\lor4, && \A(5, 3) \tto 4,\\
& \A(5, 4) \tto 2\lor3\lor5, && \A(5, 5) \tto 7, && \A(5, 6) \tto 2\lor6,\\
& \A(6, 2) \tto 2\lor4, && \A(6, 3) \tto 2\lor4, && \A(6, 4) \tto 2\lor4,\\
& \A(6, 5) \tto 2\lor4, \text{and} && \A(6, 6) \tto 2\lor6\text.
\end{align*}
A counterexample of termination is
\(\cS_1\ab \cS_1\ab (\cS_1\ab \cS_1\ab) (\cS_1\ab \cS_1\ab) (\cS_1\ab \cS_1\ab (\cS_1\ab \cS_1\ab))
      (\cS_1\ab \cS_1\ab (\cS_1\ab \cS_1\ab) (\cS_1\ab \cS_1\ab)
      (\cS_1\ab \cS_1\ab (\cS_1\ab \cS_1\ab)))\).

\subsection{Combinator \(\cS_2\)}
\label{sec:tdta-S2}
Combinator \(\cS_2\) defined by \(\cS_2 x y z w \to x z w (y z w)\)
is shown to be non-terminating by a \TDAS
\(A_{\cS_2}=\tuple{\{1,\ab 2,\ab 3,\ab 4,\ab 5,\ab 6\},\Sigma_{\cS_2},\{6\},\Dsink{6}\cup\Delta}\) where
\(\Delta\) consists of
\begin{align*}
& \cS_2 \tto 1, && \A(1, 1) \tto 2, && \A(1, 2) \tto 1\lor2,\\
& \A(1, 3) \tto 2, && \A(1, 4) \tto 1\lor2, && \A(1, 5) \tto 2,\\
& \A(2, 1) \tto 2, && \A(2, 2) \tto 2\lor3, && \A(2, 3) \tto 2,\\
& \A(2, 4) \tto 1\lor4, && \A(2, 5) \tto 1\lor4, && \A(3, 1) \tto 2,\\
& \A(3, 2) \tto 1\lor2\lor4, && \A(3, 3) \tto 2, && \A(3, 4) \tto 1\lor2,\\
& \A(3, 5) \tto 1\lor4, && \A(4, 1) \tto 2\lor3, && \A(4, 2) \tto 1\lor2,\\
& \A(4, 3) \tto 2, && \A(4, 4) \tto 1\lor4\lor5, && \A(4, 5) \tto 1\lor5,\\
& \A(5, 1) \tto 1\lor3\lor4, && \A(5, 2) \tto 1\lor2\lor4, && \A(5, 3) \tto 6,\\
& \A(5, 4) \tto 1\lor4\lor5, \text{and} && \A(5, 5) \tto 1\lor4\lor5\text.
\end{align*}
A counterexample of termination is
\(\cS_2\ab \cS_2\ab (\cS_2\ab \cS_2\ab) (\cS_2\ab \cS_2\ab) (\cS_2\ab \cS_2\ab (\cS_2\ab \cS_2\ab) (\cS_2\ab \cS_2\ab)) (\cS_2\ab \cS_2\ab (\cS_2\ab \cS_2\ab))\).



\end{onlyfull}

\end{document}